\documentclass[10pt,final,twocolumn]{IEEEtran}

\usepackage[latin1]{inputenc}
\usepackage[english]{babel}
\usepackage{amsmath,amssymb}
\usepackage{graphicx,graphicx,verbatim}
\usepackage{verbatim}
\usepackage{color}

\newtheorem{thm}{Theorem}
\newtheorem{conj}{Conjecture}
\newtheorem{cor}{Corollary}

\newtheorem{defi}{Definition}
\newtheorem{prop}{Proposition}

\newtheorem{exx}{Example}
\newtheorem{remm}{Remark}

\newenvironment{corollary}{\begin{cor}\rm }{\hfill \hspace*{1pt} \hfill $\lrcorner$ \end{cor}}
\newenvironment{proposition}{\begin{prop}\rm }{\hfill \hspace*{1pt} \hfill $\lrcorner$ \end{prop}}
\newenvironment{theorem}{\begin{thm}\rm }{\hfill \hspace*{1pt} \hfill $\lrcorner$ \end{thm}}
\newenvironment{definition}{\begin{defi}\rm }{\hfill \hspace*{1pt} \hfill $\lrcorner$ \end{defi}}
\newenvironment{remark}{\begin{remm}\rm }{\hfill \hspace*{1pt} \hfill $\lrcorner$\end{remm}}

\newenvironment{proofof}{\noindent {\em Proof of }}{\hfill \hspace*{1pt}
\hfill $\blacksquare$}
\newenvironment{proof}{\noindent {\em Proof.}}{\hfill \hspace*{1pt} \hfill $\square$}

\newcommand\real{\ensuremath{{\mathbb R}}}
\newcommand\realn{\ensuremath{{\mathbb{R}^n}}}
\newcommand\mymatrix[2]{\left[\begin{array}{#1} #2 \end{array}\right]}
\newcommand{\smallmat}[1]{\left[ \begin{smallmatrix}#1
    \end{smallmatrix} \right]}

\newcommand{\calA}{\mathcal{A}}

\newcommand{\calH}{\mathcal{H}}

\newcommand{\calU}{\mathcal{U}}
\newcommand{\calV}{\mathcal{V}}

\newcommand{\calK}{\mathcal{K}}
\newcommand{\calW}{\mathcal{W}}

\newcommand{\calX}{\mathcal{X}}

\definecolor{dark-blue}{RGB}{30,30,160}
\definecolor{dark-green}{RGB}{30,160,30}

\interdisplaylinepenalty=2500

\begin{document}

\title{Differential dissipativity theory \\ for dominance analysis
\thanks{
F. Forni and R. Sepulchre are with the University of Cambridge, Department of Engineering, 
Trumpington Street, Cambridge CB2 1PZ \texttt{f.forni@eng.cam.ac.uk|r.sepulchre@eng.cam.ac.uk}.
The research leading to these results has received funding from the European Research Council under the
Advanced ERC Grant Agreement Switchlet n.670645.}}
\author{Fulvio Forni, Rodolphe Sepulchre}
\date{\today}

\maketitle

\begin{abstract}  
High-dimensional systems that have a low-dimensional {\it dominant} behavior allow for model reduction and simplified analysis. We use differential analysis to formalize this important concept in a nonlinear setting. We show that dominance can be studied through linear dissipation inequalities and an interconnection theory that closely mimics the classical analysis of stability by means of dissipativity theory. In this approach, stability is seen as the particular situation where the dominant behavior is $0$-dimensional. The generalization opens novel tractable avenues to study multistability through $1$-dominance and limit cycle oscillations through $2$-dominance.
\end{abstract}

\section{Introduction}

The analysis of a system is considerably simplified when it is low-dimensional. Linear system analysis frequently exploits the property that a few {\it dominant} poles capture the main properties of a possibly high-dimensional system. Low-dimensional models are even more critical in nonlinear system analysis. Multistability or limit cycle analysis is difficult beyond the phase plane analysis of two-dimensional systems.

In this paper we seek to formalize the property that a nonlinear system has low-dimensional {\it dominant} behavior. Our approach is {\it differential}: we characterize the property for linear systems and then study the nonlinear system differentially, that is, along the linearized flow in the tangent bundle.  The seminal example of differential analysis in control theory is contraction analysis \cite{Lohmiller1998,Pavlov2005,Fromion2005,Russo2010,Sontag2010}, which we interpret as a differential analysis of exponential stability. The property is the contraction of a ball,  characterized via a Lyapunov dissipation inequality.  A nonlinear system is contractive when this dissipation inequality holds infinitesimally along any of its trajectories. In the present paper, the corresponding property is $0$-dominance: it ensures that the dominant behavior of the nonlinear system is $0$-dimensional. A more recent example of differential analysis is differential positivity \cite{Forni2016}, the differential analysis of positivity.  The linear property is the contraction of a cone, also characterized by a dissipation inequality. A nonlinear system is differentially positive when this dissipation inequality holds infinitesimally along any of its trajectories. In the context of the present paper, the corresponding property is $1$-dominance: it guarantees that the dominant behavior of the nonlinear system is $1$-dimensional. More generally, we study the property of $p$-dominance differentially. The property is characterized via a dissipation inequality, which is then required to hold infinitesimally along the trajectories of the nonlinear system. We prove a general theorem that formalizes that the dominant behavior of a $p$-dominant system is $p$-dimensional.

Differential analysis is general in that it allows to extend a linear system property to an arbitrary flow defined on a differentiable manifold. An important restriction in this paper is that we only  consider dissipation properties characterized by {\it linear} matrix  inequalities. Furthermore, we impose the linear dissipation inequality to be {\it uniform} on the tangent bundle. In geometric terms, this endows the differentiable manifold with a {\it flat} non-degenerate metric structure, characterized by a {\it constant} quadratic form.  

The linear-quadratic restriction allows for a fruitful  bridge with the linear-quadratic theory of dissipativity.  Many of the available computational tools of dissipativity theory become available for the analysis of $p$-dominance. The analysis of $p$-dominance is reformulated as the search of a quadratic differential {\it storage} that decreases along the solutions of the linearized dynamics. The classical interconnections theorems of (linear-quadratic) dissipativity theory are reformulated as tools that facilitate that construction. The only substantial difference with the classical theory is that the quadratic form that characterizes the storage is no longer required to be a Lyapunov function, i.e. positive definite. Instead, it is required to have a fixed inertia, that is, $p$ negative eigenvalues and $n-p$ positive eigenvalues. Stability corresponds to $p=0$, whereas $p$-dominance allows for any integer $0 \le p \le n$. 

The paper is organized as follows. Section \ref{sec:p-dominance} characterizes $p$-dominance and $p$-dissipativity for linear time-invariant models. 
Section  \ref{sec:diffdominance} defines $p$-dominance for nonlinear systems with two main results that characterize their asymptotic behavior. The connection with related results
in the literature is discussed in Section \ref{sec:literature_comparison} and dominance analysis via the solution of LMIs is illustrated in Section \ref{sec:p-dominanceLMI}. 
Differential $p$-dissipativity is addressed in Section \ref{sec:interconnections}, where we primarily illustrate the role of  differential passivity and differential small-gain theorems
in the design and analysis of multistable systems ($p=1$) and limit cycle oscillations ($p=2$).
The proofs of the main theorems are provided in appendix.

\section{Dominant LTI systems}
\label{sec:p-dominance}

\begin{definition}
\label{def:LMI-dominance}
A linear system $\dot{x} = Ax$ is  $p$-dominant with rate $\lambda \ge 0$ 
if there exists a symmetric matrix $P$ with inertia $(p,0,n-p)$  such that
\begin{equation}
\label{eq:LMI-dominance}
A^T P + P A \le  -2 \lambda P - \varepsilon I \ . 
\end{equation}
for some $\varepsilon \ge 0$. The property is {\it strict} if $\varepsilon >0$.
\end{definition}
We recall that a matrix with inertia $(p,0,n-p)$ has $p$ negative
eigenvalues, and $n-p$ positive eigenvalues. For simplicity in what follows
we will use the abbreviated terminology \emph{inertia $p$} to denote those matrices.

In terms of the quadratic form $V(x)=x^TPx$, the (strict) dissipation inequality (\ref{eq:LMI-dominance}) reads
\begin{eqnarray*}
\dot{V}(x) & =  & x^T (A^T P + P A) x \\
& \le &  -2 \lambda V(x)  - \varepsilon \mid x \mid^2
\end{eqnarray*}
For $\varepsilon > 0$
this implies that the two cones
$$ \calK^- = \{ x \in  \real^n \mid V(x) \le 0 \} , \;  \calK^+ = \{ x \in  \real^n \mid V(x) \ge 0 \}$$
are strictly contracting either in forward or in backward time:
$$
 \forall  t > 0:   e^{-At}  \calK^+ \subset   \calK^+ ,   e^{At}  \calK^- \subset   \calK^-    
$$

Equivalent characterizations of $p$-dominance are provided in the following proposition,
whose proof is in the preliminary version of this paper \cite{Forni2017a}.
\begin{proposition}
\label{prop:equivalences}
For $\varepsilon>0$, the Linear Matrix Inequality \eqref{eq:LMI-dominance} is equivalent to any of the following conditions:
\begin{enumerate}
\item The matrix $A+ \lambda I $ has $p$ eigenvalues with strictly positive real part
and $n-p$ eigenvalues with strictly negative real part.
\item there exists an invariant {\it subspace} splitting  $\real^n=  \cal H \oplus \cal V$ such that $A \cal H \subset \cal H$ and $A \cal V \subset \cal V$. The dimension of $\cal V$ is $n-p$ and  the dimension of $\calH$ is $p$.
Furthermore, there exist
constants  $0 < \underline{C} \leq 1 \leq \overline{C}$ and $\underline{\lambda} < \lambda < \overline{\lambda}$ such that
\begin{eqnarray*}
\forall x \in {\cal H}: \;  \mid e^{At} x \mid  \geq   \underline{C} \,e^{-\underline{\lambda} t} \mid x \mid,  \; t \ge 0    \\
\forall x \in {\cal V}: \;  \mid e^{At} x \mid  \leq   \overline{C}\, e^{-\overline{\lambda} t} \mid x \mid, \; t \ge 0.
\end{eqnarray*}
\end{enumerate}
\end{proposition}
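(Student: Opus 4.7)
The plan is to reduce everything to a classical indefinite Lyapunov inertia result by the shift $\tilde{A} := A + \lambda I$. Substituting this into \eqref{eq:LMI-dominance} gives the equivalent inequality $\tilde{A}^T P + P \tilde{A} \le -\varepsilon I$ with the same symmetric $P$ of inertia $(p,0,n-p)$. The $\lambda$ is absorbed and the problem becomes a pure question about when a strict indefinite Lyapunov inequality admits a solution with prescribed inertia.

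The equivalence between the LMI and (1) then follows from the Ostrowski--Schneider--Taussky inertia theorem: for a matrix $M$ with no eigenvalue on the imaginary axis, the strict Lyapunov inequality $M^T P + P M < 0$ admits a symmetric solution $P$, and the inertia of any such $P$ equals $(p_+,0,p_-)$, where $p_+$, $p_-$ are the numbers of eigenvalues of $M$ with positive and negative real part respectively. Applied to $M = \tilde{A}$, the prescribed inertia $(p,0,n-p)$ is exactly the statement in (1). The forward implication (1) $\Rightarrow$ LMI is constructive: block-diagonalise $\tilde{A}$ into an anti-stable $p\times p$ block $A_u$ and a stable $(n-p)\times(n-p)$ block $A_s$, solve the two standard Lyapunov equations $A_u^T P_u + P_u A_u = I$ and $A_s^T P_s + P_s A_s = -I$ with $P_u,P_s>0$, assemble $P$ in the original basis from the block $\mathrm{diag}(-P_u,P_s)$, and rescale to obtain the strict bound $-\varepsilon I$. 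The converse (LMI $\Rightarrow$ (1)) rules out imaginary-axis eigenvalues of $\tilde A$ (else $v^{\ast}Pv$ would be constant along $e^{\tilde{A}t}v$), and a Sylvester-inertia argument on the Lyapunov identity pins down the counts.

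For (1) $\Leftrightarrow$ (2), define $\calH$ and $\calV$ as the real generalised-eigenspace invariant subspaces of $A$ corresponding to eigenvalues with $\mathrm{Re}\,\mu > -\lambda$ and $\mathrm{Re}\,\mu < -\lambda$ respectively; these have dimensions $p$ and $n-p$ and decompose $\real^n$ as a direct sum. On $\calV$, standard matrix-exponential bounds give $|e^{At}x| \le \overline{C}\, e^{-\overline{\lambda} t} |x|$ for some $\overline{\lambda} > \lambda$. On $\calH$, the spectrum of $-A|_{\calH}$ lies strictly to the left of $\lambda$, so $\|e^{-A|_{\calH}t}\| \le C\, e^{\underline{\lambda} t}$ for some $\underline{\lambda} < \lambda$; writing $|x| = |e^{-At}(e^{At}x)| \le C\, e^{\underline{\lambda} t}|e^{At}x|$ inverts this into the required lower bound. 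Conversely, the rates in (2) force every eigenvalue of $A|_{\calV}$ to have real part $\le -\overline{\lambda} < -\lambda$ and every eigenvalue of $A|_{\calH}$ to have real part $\ge -\underline{\lambda} > -\lambda$, which is exactly (1).

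The only genuinely delicate step is the constructive direction of the inertia theorem, where the indefinite $P$ must be produced from the spectral decomposition of $\tilde{A}$ and shown to satisfy simultaneously the strict $-\varepsilon I$ bound and the exact $(p,0,n-p)$ inertia. Everything else is invariant-subspace bookkeeping together with standard spectral-radius estimates for matrix exponentials.
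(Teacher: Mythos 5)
Your argument is correct, but note that this paper does not actually reproduce its own proof of Proposition~\ref{prop:equivalences}: the proof is explicitly deferred to the preliminary conference version \cite{Forni2017a}, so there is no in-paper argument to compare against line by line. Your route --- absorbing $\lambda$ through $\tilde{A}:=A+\lambda I$, invoking the Ostrowski--Schneider--Taussky main inertia theorem for the equivalence with item (1) (with the sign convention correctly matched, so that the number of negative eigenvalues of $P$ equals the number of antistable eigenvalues of $\tilde{A}$), and deriving item (2) from the spectral splitting of $\tilde{A}$ together with standard matrix-exponential bounds --- is the standard one, and both directions (the constructive block-diagonal $\mathrm{diag}(-P_u,P_s)$ solution, and the converse via exclusion of imaginary-axis eigenvalues plus the inertia count) are sound. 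The only cosmetic remark is that no rescaling is needed at the end of the constructive step, since any negative definite left-hand side already satisfies the bound $\le -\varepsilon I$ for $\varepsilon>0$ small enough.
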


The property of $p$-dominance ensures a splitting between $n-p$ {\it transient} modes
and $p$ {\it dominant} modes.  Only the $p$ dominant modes dictate the asymptotic behavior.
Because $\lambda \ge 0$ and $\calV \subset \calK^+$, the quadratic form $V(x)$ is a Lyapunov function for the transient behavior, that is, for  the restriction of the flow in  $\calV$.

For $p=0$, $p$-dominance is the classical property of exponential stability: all modes are transient and the asymptotic behavior
is $0$-dimensional. 

The matrix inequality \eqref{eq:LMI-dominance} is equivalent to the conic constraint
\begin{equation}
\label{eq:internal}
\mymatrix{c}{\!\!\dot x \!\! \\ \!\! x \!\!}^T \!
\mymatrix{cc}{
0 & P \\ P & 2\lambda P + \varepsilon I
}
\mymatrix{c}{\!\!\dot x \!\! \\ \!\! x \!\!}
\leq 0.
\end{equation}

Dissipativity theory  extends $p$-dominance to {\it open} systems
by augmenting the internal dissipation inequality with an external supply.
The external property of  \emph{$p$-dissipativity} is captured by a
conic constraint between the state of the system $x$, its derivative $\dot{x}$,
and the external variables $y$ and $u$ of the form
\begin{equation}
\label{eq:p-dissipativity}
\mymatrix{c}{\!\!\!\dot x\!\!\! \\ \!\!\!x\!\!\!}^T \!\!
\mymatrix{cc}{
0 & P \\ P & 2\lambda P + \varepsilon I
}\!
\mymatrix{c}{\!\!\!\dot x\!\!\! \\ \!\!\!x\!\!\!}
\leq
\mymatrix{c}{\!\!\!y\!\!\! \\ \!\!\!u\!\!\!}^T \!\!
\mymatrix{cc}{
\!Q\! & \!L\! \\ \!L^T\! & \!R\!
}\!
\mymatrix{c}{\!\!\!y\!\!\! \\ \!\!\!u\!\!\!} 
\end{equation}
where $P$ is a matrix with inertia $p$, $\lambda \ge 0 $,
$L, Q,R$ are matrices of suitable dimension,
and $\varepsilon \geq 0$. The property is {\it strict} if $\varepsilon >0$.
We call supply rate
$s(y,u) := y^T Q y + y^TL u + u^T L^T y + u^T R u$ the right-hand side of \eqref{eq:p-dissipativity}.
An open dynamical system  
is \emph{$p$-dissipative with rate $\lambda$} if its
dynamics $\dot{x} = Ax + Bu$, $y=Cx + Du$ satisfy \eqref{eq:p-dissipativity}
for all $x$ and $u$. 
The property has a simple characterization in terms of 
linear matrix inequalities.
\begin{proposition}
\label{prop:LMI-dissipavitity}
A linear system $\dot{x} = Ax + Bu$, $y=Cx + Du$ is $p$-dissipative with rate $\lambda \ge 0$
if and only if there exists a 
symmetric matrix $P$ with inertia $p$ such that
\begin{equation}
\label{eq:LMI-dissipativity}
\smallmat
{\!
A^T\!  P + P A + 2\lambda P -C^T\! Q C + \varepsilon I \!&\! P B - C^T\!L - C^T\! Q D \! \\
 \! B^T P - L^T C - D^T Q C \!&\! -D^T Q D - L^TD - D^T L - R \!
 }
 \leq 0\, .
\end{equation}
\end{proposition}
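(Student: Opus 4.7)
The plan is to show both directions simultaneously by reducing the dissipation inequality \eqref{eq:p-dissipativity} to an equivalent quadratic form in $(x,u)$, then invoking the standard fact that a quadratic form is nonpositive for all arguments iff its matrix is negative semidefinite.

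First, I would substitute the system dynamics $\dot x = Ax + Bu$ and $y = Cx + Du$ into \eqref{eq:p-dissipativity}. Expanding the left-hand side gives
\begin{equation*}
2x^T P(Ax+Bu) + x^T(2\lambda P+\varepsilon I)x,
\end{equation*}
which, after symmetrization of the cross term, is a quadratic form in $(x,u)$ with blocks $A^T P + PA + 2\lambda P + \varepsilon I$, $PB$, $B^T P$, and $0$. Next, expanding the right-hand side supply rate with $y = Cx+Du$ produces
\begin{equation*}
x^T C^T Q C x + 2x^T(C^T Q D + C^T L)u + u^T(D^T Q D + L^T D + D^T L + R)u.
\end{equation*}

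Moving everything to one side, $p$-dissipativity becomes the requirement that, for all $(x,u)$,
\begin{equation*}
\begin{bmatrix} x \\ u \end{bmatrix}^T M(P) \begin{bmatrix} x \\ u \end{bmatrix} \leq 0,
\end{equation*}
where $M(P)$ is precisely the symmetric block matrix on the left of \eqref{eq:LMI-dissipativity}. Since $x$ and $u$ are arbitrary and independent, the inequality holds for all $(x,u)$ if and only if $M(P) \le 0$, which is exactly \eqref{eq:LMI-dissipativity}. The inertia requirement on $P$ is inherited unchanged from the definition of $p$-dissipativity, so no additional argument is needed there.

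There is no genuine obstacle: the proof is a direct algebraic verification, and the only care needed is in bookkeeping the cross terms involving $L$, $Q$, and $D$ when symmetrizing. The non-triviality of the statement lies entirely in the fact that $P$ is constrained to have inertia $p$ rather than being positive definite; but this constraint is imposed identically on both sides of the equivalence, so it plays no active role in the derivation and simply carries through.
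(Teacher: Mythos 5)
Your proof is correct and follows essentially the same route as the paper: substitute $\dot x = Ax+Bu$, $y = Cx+Du$ into \eqref{eq:p-dissipativity}, collect terms into a symmetric quadratic form in $(x,u)$, and use the fact that such a form is nonpositive for all $(x,u)$ iff its matrix is negative semidefinite (the paper merely writes the two directions separately as ``substitute and rearrange'' and ``multiply by $[x^T\ u^T]$ on the left and its transpose on the right''). Your bookkeeping of the cross terms and the remark that the inertia constraint on $P$ carries through unchanged are both accurate.
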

\begin{proof} $ $
[$\Rightarrow$]
Just replace $\dot{x} = Ax+Bu$ and $y = Cx+Du$ in \eqref{eq:p-dissipativity}
and rearrange. [$\Leftarrow$] Multiply \eqref{eq:LMI-dissipativity} by
$[x^T \, u^T]$ on the left, and by $[x^T \, u^T ]^T$ on the right. Then we get
$
\dot{x}^T P x + x^T P \dot{x} +2\lambda x^T Px <  s(y,u)
$
 as desired.
\end{proof}

An interconnection theorem can be easily derived. A proof is provided in the preliminary version 
of this paper \cite{Forni2017a}, see also the proof of Theorem \ref{thm:interconnection}.
\begin{proposition}
\label{thm:dissipativity}
Let $\Sigma_1$ and $\Sigma_2$ 
$p_1$-dissipative and $p_2$-dissipative
systems respectively, with
uniform rate $\lambda$ and 
with supply rate
\begin{equation}
s_i(y_i,u_i) = \mymatrix{c}{\!\!y_i\!\! \\ \!\!u_i\!\!}^T \!
\mymatrix{cc}{
Q_i & L_i \\ L_i^T & R_i
}
\mymatrix{c}{\!\!y_i\!\! \\ \!\!u_i\!\!} 
\end{equation}
for $i \in \{1,2\}$. The closed-loop system given 
by negative feedback interconnection
\begin{equation}
\label{eq:feedback_interconnection}
u_1 = -y_2 + v_1 \qquad u_2 = y_1 + v_2
\end{equation}
is $(p_1+p_2)$-dissipative with rate $\lambda$
from $v = (v_1,v_2)$ to $y = (y_1,y_2)$
with supply rate 
\begin{equation}
\label{eq:cl_supply}
s(y,v) = 
\mymatrix{c}{\!\!\!y \!\!\! \\ \!\!\!v\!\!\!}^{\!T} \!\!
{\footnotesize 
\mymatrix{cc|cc}{
Q_1 + R_2 & -L_1 + L_2^T & L_1 & R_2 \\
-L_1^T + L_2 & Q_2  + R_1 & -R_1 & L_2 \\ \hline
L_1^T &  -R_1 & R_1 & 0 \\
R_2 & L_2^T & 0 & R_2 
}\!
}\!
\mymatrix{c}{\!\!\!y\!\!\! \\ \!\!\!v\!\!\!}  .
\end{equation}
Furthermore, the closed-loop system is $(p_1+p_2)$-dominant with rate $\lambda$ if 
\begin{equation}
\label{eq:dissipativity_interconnection}
\mymatrix{cc}{
Q_1 + R_2 & -L_1 + L_2^T \\ -L_1^T + L_2 & Q_2  + R_1
} 
\leq 0 \ . \vspace{-3mm}
\end{equation}
\end{proposition}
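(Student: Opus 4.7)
The plan is to build the closed-loop storage as the direct sum of the subsystem storages and to carry out a straightforward bookkeeping of the supply rates after substitution of the feedback equations. I would proceed in four short steps.

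First, associate with each subsystem $\Sigma_i$ its quadratic storage $V_i(x_i) = x_i^T P_i x_i$, where $P_i$ has inertia $p_i$ and satisfies \eqref{eq:p-dissipativity} with rate $\lambda$, supply $s_i(y_i,u_i)$, and slack $\varepsilon_i \ge 0$. Define the closed-loop storage $V(x_1,x_2) = V_1(x_1) + V_2(x_2) = x^T P\, x$ with $P = \mathrm{diag}(P_1,P_2)$. Since $P$ is block-diagonal, its inertia is the sum of the inertias, namely $p_1+p_2$, so it is a legitimate candidate storage for $(p_1+p_2)$-dissipativity. Summing the two subsystem inequalities yields
\begin{equation*}
\dot V + 2\lambda V + \varepsilon|x|^2 \;\le\; s_1(y_1,u_1) + s_2(y_2,u_2),
\end{equation*}
with $\varepsilon = \min(\varepsilon_1,\varepsilon_2)$.

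Second, I would substitute the interconnection equations \eqref{eq:feedback_interconnection} into $s_1+s_2$ and collect quadratic terms in $(y_1,y_2,v_1,v_2)$. Expanding $s_1(y_1,-y_2+v_1)$ produces the terms $y_1^TQ_1y_1$, $-2y_1^TL_1y_2$, $2y_1^TL_1v_1$, $y_2^TR_1y_2$, $-2y_2^TR_1v_1$, $v_1^TR_1v_1$; analogously, $s_2(y_2,y_1+v_2)$ produces $y_2^TQ_2y_2$, $2y_2^TL_2y_1$, $2y_2^TL_2v_2$, $y_1^TR_2y_1$, $2y_1^TR_2v_2$, $v_2^TR_2v_2$. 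Sorting by variable pairs reproduces exactly the symmetric block matrix in \eqref{eq:cl_supply} (the $y_1y_2$ coupling contributes $-L_1+L_2^T$, the diagonal $y_i$-blocks gain the off-diagonal $R_j$ terms, and the cross terms with $v$ inherit $L_i$, $\pm R_i$ as indicated). This establishes the $(p_1+p_2)$-dissipativity claim with the announced supply rate.

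Third, for the dominance conclusion I would specialize to $v=0$. Under this specialization the closed-loop supply rate reduces to
\begin{equation*}
s(y,0) = \mymatrix{c}{y_1 \\ y_2}^T \!\mymatrix{cc}{Q_1+R_2 & -L_1+L_2^T \\ -L_1^T+L_2 & Q_2+R_1} \mymatrix{c}{y_1 \\ y_2},
\end{equation*}
which is nonpositive by hypothesis \eqref{eq:dissipativity_interconnection}. Combining with the dissipation inequality from the first step yields $\dot V + 2\lambda V + \varepsilon|x|^2 \le 0$, i.e.\ the matrix inequality \eqref{eq:LMI-dominance} for the closed loop with storage $P = \mathrm{diag}(P_1,P_2)$ of inertia $p_1+p_2$, establishing $(p_1+p_2)$-dominance at rate $\lambda$.

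There is no real obstacle beyond the algebraic bookkeeping of Step~2; the delicate point is that the inertia of a block-diagonal symmetric matrix is additive, which is what allows one to \emph{add} dominance degrees under interconnection, in contrast with classical dissipativity where both storages are required to be positive definite. This additivity of inertia is precisely the algebraic fact that makes $p$-dissipativity compose in the same way as ordinary dissipativity.
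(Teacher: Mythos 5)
Your proposal is correct and follows essentially the same route as the paper: the paper's proof (deferred to the preliminary version and mirrored in the proof of Theorem~\ref{thm:interconnection}) likewise aggregates the storages into the block-diagonal $P=\mathrm{diag}(P_1,P_2)$ of inertia $p_1+p_2$, sums the two dissipation inequalities, substitutes the interconnection equations into $s_1+s_2$ to obtain \eqref{eq:cl_supply}, and sets $v=0$ with \eqref{eq:dissipativity_interconnection} to conclude dominance. Your explicit bookkeeping of the cross terms and the remark on additivity of inertia are exactly the content of that argument.
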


Mimicking classical dissipativity theory, there are two important particular cases of supply
rates : the passivity supply
\begin{equation}
\label{eq:passive}
s(y,u) = \mymatrix{c}{\!\!y\!\! \\ \!\!u\!\!}^T \!
\mymatrix{cc}{
0 & I \\ I & 0
}
\mymatrix{c}{\!\!y\!\! \\ \!\!u\!\!}  \ .
\end{equation}
and the gain supply:
\begin{equation}
\label{eq:L2}
s(y,u) = \mymatrix{c}{\!\!y\!\! \\ \!\!u\!\!}^T \!
\mymatrix{cc}{
-I & 0 \\ 0 & \gamma^2 I
}
\mymatrix{c}{\!\!y\!\! \\ \!\!u\!\!}  \ .
\end{equation}
Hence, Proposition  \ref{thm:dissipativity} provides an analog of the small-gain theorems and passivity theorems for 
$p$-dominance of a linear time-invariant system.

\section{Differential analysis of \\ dominant nonlinear systems}

\label{sec:diffdominance}

For a nonlinear system 
\begin{equation}
\label{eq:system}
\dot{x} = f(x)  \qquad x\in \calX
\end{equation}
we define dominance {\it differentially}, that is, through the linear dissipation inequality
\begin{equation}
\label{eq:diff-internal}
\mymatrix{c}{\!\!\dot {\delta x}\!\! \\ \!\!\delta x\!\!}^T \!
\mymatrix{cc}{
0 & P \\ P & 2\lambda P + \varepsilon I
}
\mymatrix{c}{\!\!\dot {\delta x}\!\! \\ \!\!\delta x\!\!}
\leq 0
\end{equation}
for every $\delta x \in T_x\calX$, 
where $P$ is a matrix with inertia $p$
and  $\varepsilon \ge 0$.

In what follows we assume that $\calX$ is a smooth Riemannian  manifold of dimension $n$.
Given any $\delta x \in T_x\calX$, $| \delta x |$ denotes the Riemannian metric on $\calX$ 
represented by $\sqrt{\delta x^T\! \delta x}$ in local coordinates.
$\psi^t(x)$ denotes the flow of \eqref{eq:system} at time $t$ passing through $x\in \calX$ at time $0$.
$\partial \psi^t(x)$ denotes the differential of $\psi^t(x)$ with respect to $x$.  Note that $(\psi^t(x), \partial \psi^t(x)\delta x)$ is a flow in the tangent bundle.
It is the solution of the prolonged system 
\cite{Crouch1987}
\begin{equation}
\label{eq:prolonged}
\left\{
\begin{array}{rcl}
\dot{x} &=& f(x) \\
\dot{\delta x} &=& \partial f(x) \delta x 
\end{array}
\right.
\qquad (x,\delta x)\in T\calX
\end{equation}
at time $t$ passing through  $(x,\delta x)\in T\calX$ at time $0$.
In local coordinates $\partial f(x)$ denotes the differential of $f$ at $x$
and $P$ is the local representation of a  metric tensor $P$  with
fixed inertia. 

Following the approach of Section \ref{sec:p-dominance},
we define the property of dominance for nonlinear systems
as follows.
\begin{definition}
\label{def:diff_dominance}
A nonlinear system $\dot{x} = f(x)$ is  $p$-dominant with rate $\lambda \ge 0$ 
if there exists a symmetric matrix $P$ with inertia $p$  such that
\eqref{eq:diff-internal} is satisfied by the solutions of the prolonged system \eqref{eq:prolonged}
for some $\varepsilon \ge 0$. The property is {\it strict} if $\varepsilon >0$.
\end{definition}

In terms of the quadratic function $V(\delta x) := \delta x^T P \delta x$, the differential dissipation inequality (\ref{eq:diff-internal}) reads
\begin{eqnarray}
\label{eq:useful_ineq}
\nonumber \dot{V} (\delta x) & = & \delta x^T\! \left( \partial f(x)^T \! P + P \partial f(x)  \right) \delta x  \\
\label{diffdissipation} & \le &  - 2 \lambda V(\delta x)   - \varepsilon \mid \delta x \mid^2
\end{eqnarray}
As a consequence, for $\varepsilon>0$, the two cone fields 
 \begin{eqnarray*}
  \calK^+(x) := \{\delta x\in T_x \calX \,|\,V(\delta x) \geq 0\},  \\   \calK^-(x) := \{\delta x\in T_x \calX \,|\,V(\delta x) \leq 0\} 
  \end{eqnarray*} 
are strictly contracting either  in forward time or in backward time,
respectively: 
 \begin{eqnarray*}
 \partial \psi^{-t}(x)  \calK^+(x) \subset   \calK^+(\psi^{-t}(x) )  \qquad  \forall  t > 0 \\
 \partial \psi^t(x)   \calK^-(x)  \subset   \calK^-(\psi^t(x))   \qquad \forall  t > 0      
\end{eqnarray*} 

The following result provides the differential analog of Proposition \ref{prop:equivalences}.
\begin{theorem}
\label{thm:splitting}
Let $\calA \subseteq\calX$ be a compact invariant set and 
let  \eqref{eq:system} be a strictly $p$-dominant system with rate $\lambda \geq 0$ and tensor $P$.
Then,  for each $x\in\calA$, there exists an invariant splitting $T_x\calX = \calH_{x} \oplus \calV_{x}$
such that
\begin{equation}
\label{eq:splitting}
\begin{array}{rcll}
\partial \psi^t(x) \calH_{x} \!\! & \!\! \subseteq \!\! & \!\! \calH_{\psi^t(x) } &\quad \forall t \in \real \, ,  \vspace{1mm}\\
\partial \psi^t(x) \calV_{x} \!\! & \!\! \subseteq \!\! & \!\! \calV_{\psi^t(x) }  &\quad \forall t \in \real \, .
\end{array}
\end{equation}
 $\calH_x$ and $\calV_x$ are distributions of dimension $p$ and $n-p$, respectively. Furthermore,
 there exist constants $\underline{C} \leq 1 \leq \overline{C}$ and $\underline{\lambda} < \lambda < \overline{\lambda}$ such that
\begin{subequations}
\label{eq:domination}
\begin{align}
|\partial \psi^t(x) \delta x| & \geq \underline{C} e^{-\underline{\lambda} t} |\delta x| \quad \forall x \in \calA, \forall \delta x \in \calH_x \, \label{eq:dominationH} \\
|\partial \psi^t(x) \delta x| & \leq \overline{C} e^{-\overline{\lambda} t} |\delta x| \quad \forall x \in \calA, \forall \delta x \in \calV_x \, . \label{eq:dominationV} 
\end{align}
\end{subequations}
\end{theorem}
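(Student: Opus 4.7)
The plan is to build $\calH_x$ and $\calV_x$ as limits of cones propagated along the linearized flow, and then to read the rates \eqref{eq:domination} off of $V$ itself restricted to each distribution. The strict differential dissipation \eqref{diffdissipation} first yields, by direct integration along the prolonged flow,
\begin{equation*}
V(\partial\psi^t(x)\delta x) \leq e^{-2\lambda t} V(\delta x) - \varepsilon \!\int_0^t \! e^{-2\lambda(t-s)} |\partial\psi^s(x)\delta x|^2\, ds,
\end{equation*}
so whenever $\delta x \neq 0$ and $V(\delta x) \leq 0$ the right-hand side is strictly negative for every $t > 0$. Hence $\calK^-$ is strictly forward-invariant under $\partial\psi^t$, uniformly on the compact set $\calA$, and by time reversal $\calK^+$ is strictly backward-invariant.

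Next I would define
\begin{align*}
\calH_x & := \bigcap_{t \geq 0} \partial\psi^{t}(\psi^{-t}(x))\,\calK^-(\psi^{-t}(x)), \\
\calV_x & := \bigcap_{t \geq 0} \partial\psi^{-t}(\psi^{t}(x))\,\calK^+(\psi^{t}(x)).
\end{align*}
Each family is monotone decreasing in $t$ (by the invariances just established) and contained in $\calK^-(x)$, resp.\ $\calK^+(x)$, so the intersections are well-defined closed cones at every $x \in \calA$. Invariance \eqref{eq:splitting} is a direct consequence of the definition, since the nested structure of each family is stable under dropping initial indices while $\partial\psi^s(x)$ commutes with the intersection of a nested family.

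The main obstacle is upgrading $\calH_x$ and $\calV_x$ from cones to subspaces of dimension $p$ and $n-p$ that together span $T_x\calX$. For this step I would pass to the Grassmannians $\mathcal{G}(p,n)$ and $\mathcal{G}(n-p,n)$ of tangent planes. By the inertia of $P$, the set of $p$-planes contained in $\calK^-(x)$ on which $V$ is strictly negative definite is nonempty and compact; the strict cone contraction, combined with the signature constraint on $V$, forces the induced push-forward map on this set to be a uniform contraction in a suitable Grassmannian metric, in the spirit of the Birkhoff/Hilbert argument for convex positive cones but using the signature of $V$ in place of positivity. The nested intersection therefore collapses onto a single $p$-plane, which is $\calH_x$. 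The symmetric argument run backward in time on $(n-p)$-planes inside $\calK^+$ produces $\calV_x$. Transversality $\calH_x \cap \calV_x = \{0\}$ follows because the strict contractions place $\calH_x \subseteq \mathrm{int}\,\calK^-(x) \cup \{0\}$ and $\calV_x \subseteq \mathrm{int}\,\calK^+(x) \cup \{0\}$, and the only vector common to the two interiors is the origin.

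Finally, to recover the rates I would use $V$ as a Lyapunov certificate on each subspace. By continuity of the distributions $\calH$, $\calV$ and compactness of $\calA$, there exist $0 < c \leq C$ such that $c|\delta x|^2 \leq V(\delta x) \leq C|\delta x|^2$ on $\calV_x$ and $c|\delta x|^2 \leq -V(\delta x) \leq C|\delta x|^2$ on $\calH_x$, uniformly in $x \in \calA$. Injecting $|\delta x|^2 \geq V/C$ into \eqref{diffdissipation} on $\calV_x$ gives $\dot V \leq -(2\lambda + \varepsilon/C) V$, from which \eqref{eq:dominationV} follows with $\overline\lambda = \lambda + \varepsilon/(2C) > \lambda$ and $\overline C = \sqrt{C/c} \geq 1$. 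Applying the same inequality to $U = -V$ on $\calH_x$ yields $\dot U \geq -(2\lambda - \varepsilon/C) U$, and hence \eqref{eq:dominationH} with $\underline\lambda = \lambda - \varepsilon/(2C) < \lambda$ and $\underline C = \sqrt{c/C} \leq 1$.
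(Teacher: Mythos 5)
Your proof follows essentially the same route as the paper's: strict cone invariance and contraction obtained by integrating the differential dissipation inequality, the distributions defined as nested intersections of propagated cones (the $e^{\lambda t}$ rescaling used in the paper is immaterial since cones are scale-invariant), and the exponential rates recovered from the uniform definiteness of $\pm V$ on $\calH_x$ and $\calV_x$ together with the differential inequality $\dot V \leq -(2\lambda \pm \varepsilon/C)V$. The only difference is that where you sketch the Grassmannian contraction argument to collapse the nested cones onto $p$- and $(n-p)$-planes, the paper defers exactly that step to Newhouse's cone-field theorem, so the two proofs coincide in substance.
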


The interpretation of the theorem is that the linearized flow $\partial \psi^t(\cdot)$ admits an invariant splitting between $n-p$ transient modes
and $p$ dominant modes. The  $p$ dominant modes dictate the long-term behavior of the flow. 
The quadratic form $V(\delta x)$ is a 
differential Lyapunov function in the invariant distribution $\calV \subset T\calX$.

For $\calX = \realn$ the characterization of the asymptotic behavior of dominant
systems can be further refined. 
This is because, by integration, the differential dissipation inequality \eqref{eq:useful_ineq}
leads to the {\it incremental} inequality
\begin{equation}
\label{eq:incremental}
\begin{array}{rcl}
\!\dot{V}(x\!-\!y) \!\!&\!\!\!=\!\!&\!\! (x\!-\!y)^T\!P(f(x) \!-\! f(y)) \!+\! (f(x) \!-\! f(y))^T\!P(x\!-\!y) \\
\!\!&\!\!=\!\!&\!\! 2 (x-y) \left( \int_0^1 P \partial f(s x + (1\!-\! s) y) d s \right) (x-y) \\
\!\!&\!\!\leq \!\! & \!\! -\varepsilon (x-y)^T \left( \int_0^1 -2\lambda P + \varepsilon I \, ds \right)  (x-y)  \\
\!\!&\!\!\leq \!\! & \!\! - 2 \lambda V(x-y)   - \varepsilon \mid x-y \mid^2 \ .
\end{array}
\end{equation} 
The following result is based on the incremental dissipation inequality (\ref{eq:incremental}).  
We denote by $\Omega(x)$ the $\omega$-limit set of $x$, that is, the set of all $\omega$-limit points of $x$.

\begin{theorem}
\label{thm:reduced}
For $\calX = \realn$, let  \eqref{eq:system} be a 
strictly $p$-dominant system with rate $\lambda \geq 0$.
Then, the flow on any compact $\omega$-limit set is topologically equivalent 
to a flow on a compact invariant set of a Lipschitz system in $\real^p$.
\end{theorem}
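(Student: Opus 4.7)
The plan is to reduce the flow on $\Omega$ to a Lipschitz flow on the image of $\Omega$ under the orthogonal projection onto the negative eigenspace of $P$. The starting point is the incremental inequality \eqref{eq:incremental}: along any two solutions $x(t)$ and $y(t)$,
\begin{equation*}
\tfrac{d}{dt}\bigl(e^{2\lambda t}\, V(x(t)-y(t))\bigr) \;\leq\; -\varepsilon\, e^{2\lambda t}\,|x(t)-y(t)|^2 \;\leq\; 0,
\end{equation*}
so $e^{2\lambda t}\, V(x(t)-y(t))$ is non-increasing in $t$.

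First I would show $V(x-y)\leq 0$ for every pair $x,y\in\Omega$. A compact $\omega$-limit set is invariant under $\psi^t$ for both positive and negative $t$, so the orbits through $x$ and $y$ remain inside $\Omega$ for all times, yielding the uniform bound $|V(x(t)-y(t))|\leq \|P\|\,\mathrm{diam}(\Omega)^2$. If one had $V(x-y)>0$, the monotonicity above would force $V(x(-t)-y(-t)) \geq e^{2\lambda t} V(x-y) \to +\infty$ as $t\to+\infty$, a contradiction.

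Next I would decompose $\real^n = E^+\oplus E^-$ into the orthogonal positive and negative eigenspaces of $P$, with $\dim E^- = p$, and let $\pi:\real^n \to E^- \cong \real^p$ be the orthogonal projection. For $x,y\in\Omega$ written as $x-y = z^+ + z^-$ with $z^\pm\in E^\pm$, the inequality $V(x-y) = (z^+)^T P z^+ + (z^-)^T P z^- \leq 0$ combined with the spectral estimates $(z^+)^T P z^+ \geq \alpha|z^+|^2$ and $-(z^-)^T P z^- \leq \beta|z^-|^2$ (for some $\alpha,\beta>0$) gives
\begin{equation*}
|x-y|^2 \;\leq\; (1+\beta/\alpha)\,|\pi(x)-\pi(y)|^2,
\end{equation*}
so $h := \pi|_\Omega$ is a bi-Lipschitz homeomorphism onto the compact set $\tilde\Omega := \pi(\Omega)\subset\real^p$. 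Transporting the flow via $h$, the map $\tilde\psi^t := h\circ\psi^t\circ h^{-1}$ is a continuous flow on $\tilde\Omega$ generated by the Lipschitz vector field $g(\xi) := \pi f(h^{-1}(\xi))$ (Lipschitz because $f$ is $C^1$ on a neighbourhood of the compact $\Omega$, $h^{-1}$ is Lipschitz on $\tilde\Omega$, and $\pi$ is linear); an application of the McShane--Kirszbraun extension produces a globally Lipschitz vector field on $\real^p$ whose flow leaves $\tilde\Omega$ invariant and is conjugate by $h$ to the flow on $\Omega$.

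The substantive obstacle is the first step: it rests on the two-sided invariance of compact $\omega$-limit sets together with the Gronwall-type manipulation of \eqref{eq:incremental} in backward time, and this is what forces the projection onto the dominant eigenspace to be injective on $\Omega$. Once $V\leq 0$ on $\Omega-\Omega$ is secured, the bi-Lipschitz character of $\pi|_\Omega$ and the Lipschitz extension of the reduced vector field are essentially standard.
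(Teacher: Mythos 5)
Your route is essentially the paper's: the incremental inequality \eqref{eq:incremental} forces $V(x-y)\le 0$ for pairs of $\omega$-limit points, which makes the linear projection onto the $p$-dimensional negative eigenspace of $P$ injective on $\Omega$, and the reduced flow is then generated by $\Pi f\circ \Pi^{-1}$, Lipschitz on the projected set (the paper cites Hirsch--Smith, Theorem 3.17, for this last step). Two remarks. First, a genuine gap: your contradiction in the first step reads ``$V(x(-t)-y(-t))\ge e^{2\lambda t}V(x-y)\to+\infty$,'' which only works for $\lambda>0$; the theorem allows $\lambda=0$, where $e^{2\lambda t}\equiv 1$ and nothing diverges. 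The fix is to keep the $\varepsilon$-term you already display: integrating backward gives
\begin{equation*}
V\bigl(x(-t)-y(-t)\bigr)\;\ge\; V(x-y)\;+\;\varepsilon\!\int_{-t}^{0} e^{2\lambda \tau}\,|x(\tau)-y(\tau)|^{2}\,d\tau ,
\end{equation*}
and boundedness of $V$ on the compact invariant set $\Omega$ forces $\liminf_{\tau\to-\infty}|x(\tau)-y(\tau)|=0$, hence $V(x(\tau_k)-y(\tau_k))\to 0$ along a subsequence, contradicting $V(x(\tau_k)-y(\tau_k))\ge V(x-y)>0$. (This is in effect what the paper's estimate \eqref{eq:quadratic_decay} encodes; the paper even concludes the strict inequality $V(x-y)<0$, though your weak inequality suffices for injectivity.)

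Second, a point in your favour: your explicit two-sided bound $|x-y|^2\le(1+\beta/\alpha)\,|\pi(x)-\pi(y)|^2$ is exactly what is needed to justify that $\pi|_\Omega^{-1}$, and hence the reduced vector field $g=\pi\circ f\circ\pi|_\Omega^{-1}$, is Lipschitz --- a detail the paper compresses into ``Lipschitz by construction.'' The orthogonal projection onto $E^-$ coincides with the paper's projection onto $\calH_P$ parallel to $\calV_P$ since $P$ is symmetric, so there is no real divergence there. With the $\lambda=0$ repair above, your argument is complete.
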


For small values of $p$, Theorem \ref{thm:reduced} severely constrains the possible attractors of the system.

\begin{corollary}
\label{thm:asymptotic_behavior}
Under the assumptions of Theorem \ref{thm:reduced},
every bounded solution asymptotically converges to

$\bullet$~a unique fixed point if $p = 0$;

$\bullet$~a fixed point if $p = 1$;

$\bullet$~a simple attractor if $p=2$, that is, a fixed point, a set of fixed points and connecting arcs, or a limit cycle. 
\end{corollary}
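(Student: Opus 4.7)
The plan is to reduce each case to a statement about low-dimensional flows via Theorem \ref{thm:reduced} and then apply standard asymptotic results. For any bounded trajectory $\psi^t(x)$, the $\omega$-limit set $\Omega(x)$ is nonempty, compact, connected, and flow-invariant; by Theorem \ref{thm:reduced}, the flow on $\Omega(x)$ is topologically equivalent to the flow on a compact invariant set of a Lipschitz vector field on $\real^p$. Since topological equivalence maps orbits to orbits while preserving time orientation, it preserves equilibria, periodic orbits, and unions of equilibria with connecting arcs. It therefore suffices to analyze the possible $\omega$-limit sets of the reduced Lipschitz flow in $\real^p$ for $p\in\{0,1,2\}$.

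For $p=0$, the reduced space is a single point, so $\Omega(x)$ collapses to a single fixed point of $f$. Uniqueness across initial conditions follows from the incremental inequality \eqref{eq:incremental}: since $P$ has inertia $(0,0,n)$, it is positive definite, so $V(x-y)$ is a proper Lyapunov function for the incremental dynamics and $|\psi^t(x)-\psi^t(y)|\to 0$ exponentially for any pair of solutions, forcing all bounded trajectories to share the same limit. For $p=1$, trajectories of a Lipschitz scalar ODE are monotone, so the $\omega$-limit of any bounded trajectory is a single equilibrium; pulling back through the topological equivalence yields a fixed point of $f$. For $p=2$, the reduced flow evolves on a compact invariant set of a Lipschitz planar vector field, and the Poincar\'e--Bendixson theorem applies, delivering an $\omega$-limit that is either a fixed point, a periodic orbit, or a union of fixed points joined by connecting arcs, exactly the simple-attractor classification in the statement.

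The main technical point is the applicability of Poincar\'e--Bendixson under the weaker Lipschitz hypothesis supplied by Theorem \ref{thm:reduced}. The classical proof uses only the Jordan curve theorem and uniqueness of trajectories, the latter being guaranteed by Lipschitz regularity, so no sharpening of the standard argument is required. A secondary item is verifying that topological equivalence preserves the qualitative type of each component of the $\omega$-limit set; this is automatic, since a homeomorphism sending orbits to orbits and respecting time orientation carries equilibria to equilibria, periodic orbits to periodic orbits, and connecting orbits to connecting orbits of matching type.
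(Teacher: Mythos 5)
Your argument matches the paper's: the corollary is stated there without a separate proof, but the paper's proof of Corollary \ref{thm:poincare_bendixson} uses exactly your reduction-plus-Poincar\'e--Bendixson strategy for $p=2$, and the $p=0$ uniqueness claim is justified in the paper's discussion of contraction by the same incremental inequality \eqref{eq:incremental} you invoke. The one shared looseness (present in the paper's own Corollary \ref{thm:poincare_bendixson} proof as well) is that the image of $\Omega(x)$ under the equivalence of Theorem \ref{thm:reduced} is not literally an $\omega$-limit set of the reduced flow, so strictly one should transfer the chain transitivity of $\Omega(x)$ through the equivalence before classifying compact connected chain-transitive invariant sets in dimension $p\le 2$; this does not change the conclusion.
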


\section{Proofs of Theorems \ref{thm:splitting} and \ref{thm:reduced}}

\begin{proofof}{\emph{Theorem \ref{thm:splitting}}.}

\underline{Invariant splitting.}
For any $p > 0$ and for  any $\delta x \in \calK^-$, the dissipation inequality (\ref{diffdissipation}) implies 
\begin{equation}
\label{eq:invarianceK-}
\dot V( \delta x) \le  -2 \lambda V(\delta x) - \varepsilon \mid \delta x \mid^2 \le - \varepsilon \mid \delta x \mid^2
\end{equation}
for all $x \in \cal X$ and all $\delta x$ on the boundary of $\calK^-$, 
which guarantees that 
\begin{subequations}
\label{eq:invariance2K-}
\begin{align}
\forall t \ge 0 : & \,\, \partial \psi^t \calK^- \subseteq \calK^-\\
\forall t > 0 : & \,\, \partial \psi^t (\calK^-\setminus\{0\}) \subset \calK^-.
\end{align}
\end{subequations}
The dissipation inequality (\ref{diffdissipation}) also implies
\begin{equation}
\label{expestimateK-}
  \dot V( \delta x) \le  -2 \lambda V(\delta x) - \varepsilon \mid \delta x \mid^2 \le - (2 \lambda - \varepsilon_1) V(\delta x) 
  \end{equation}
for $\varepsilon_1 := \frac{\varepsilon}{\mid \lambda_{\min}(P) \mid}  >0$. Time-integration of this inequality yields the estimate
\begin{equation}
\label{expestimate2K-}
 \forall t \ge 0 : \frac{e^{2 \lambda t} V (\partial \psi^t \delta x)}{V(\delta x)} \ge e^{\varepsilon_1 t} 
  \end{equation}
which holds uniformly for all $x \in \cal X$ and all  $\delta x$ in the interior of $\calK^-$. 
\eqref{eq:invariance2K-} and \eqref{expestimate2K-} guarantee that
there exist $T>0$ and $\mu > 1$ such that
$\frac{|e^{\lambda t} \partial \psi^t \delta x|}{|\delta x|} \geq \mu$ for all $t\geq T$,
all $x\in \calX$ and all $\delta x \in \calK^-$.

Likewise, for any $n-p > 0$ and for any $\delta x \in \calK^+$, the dissipation inequality (\ref{diffdissipation}) implies 
\begin{equation}
\label{expestimateK+}
\dot V( \delta x) \le  - 2 \lambda V(\delta x) - \varepsilon \mid \delta x \mid^2 \le -(2 \lambda + \varepsilon_2) V(\delta x) 
 \end{equation}
for  $\varepsilon_2 = \frac{\varepsilon}{\lambda_{\max}(P) }  >0$. 
Integration of the first inequality backward time 
guarantees that 
\begin{subequations}
\label{eq:invariance2K+}
\begin{align}
\forall t \ge 0 : & \,\, \partial \psi^{-t} \calK^+ \subseteq \calK^+\\
\forall t > 0 : & \,\, \partial \psi^{-t} (\calK^+\setminus\{0\}) \subset \calK^+.
\end{align}
\end{subequations}
Integration of the second inequality backward time also yields the estimate
\begin{equation}
\label{expestimate2K+}
\forall t \ge 0 : \frac{e^{-2 \lambda t} V (\partial \psi^{-t} \delta x)}{V(\delta x)} \ge e^{\varepsilon_2 t} 
\end{equation}
which holds uniformly for all $x \in \cal X$ and all  $\delta x$ in the interior of $\calK^+$.
As above, \eqref{eq:invariance2K+} and \eqref{expestimate2K+} guarantee that
there exist $T>0$ and $\mu > 1$ such that
$\frac{|e^{-\lambda t} \partial \psi^{-t} \delta x|}{|\delta x|} \geq \mu$ for all $t\geq T$,
all $x\in \calX$ and all $\delta x \in \calK^+$.

From here, we proceed as in the proof of \cite[Theorem 1.2]{Newhouse2004} 
(see also \cite[Chapter 3]{HandbookDynamicalSystemsV1A}) to show that
$$
\begin{array}{rcl}
\calH_x &:=& \bigcap_{t \geq 0}e^{\lambda t} \partial \psi^t(x) \calK^-(\psi^{-t}(x)) \subset \calK^-(x) \vspace{1mm}\\
\calV_x &:=& \bigcap_{t \geq 0}e^{-\lambda t} \partial \psi^{-t}(\psi^t(x)) \calK^+(\psi^{t}(x)) \subset \calK^+(x)
\end{array}
$$
are invariant distributions  of dimension $p$ and $n-p$ respectively, that is,
$$
\begin{array}{rcll}
e^{\lambda t} \partial \psi^t(x) \calH_{x} \!\! & \!\! \subseteq \!\! & \!\! \calH_{\psi^t(x) } &\quad \forall t \in \real \ ,  \vspace{1mm}\\
e^{\lambda t}\partial \psi^t(x) \calV_{x} \!\! & \!\! \subseteq \!\! & \!\! \calV_{\psi^t(x) }  &\quad \forall t \in \real \ .
\end{array}
$$
Since $e^{\lambda t}$ is just a scalar factor, \eqref{eq:splitting} follows.

\underline{Exponential estimates.} Observe that $\delta x \in \calH$ implies that $\delta x$ belongs to the interior of $\calK^-$.
The estimate \eqref{eq:dominationH} with $\underline{\lambda} = \lambda - \frac{\varepsilon_1}{2}$ follows from the fact that  
$-V(\delta x)$ is positive definite in $\calH$ and that $V(\delta x)$ satisfies (\ref{expestimateK-}). For instance, 
there exist $0 < \rho_1\leq \rho_2$ such that 
$\rho_1 \delta x^T\! \delta x \leq -V(\delta x) \leq \rho_2 \delta x^T\! \delta x $ for all $\delta x \in \calH$
and \eqref{expestimateK-} gives
$
\rho_2 |\partial \psi^t \delta x|^2 \geq -V (\partial \psi^t \delta x) \geq -e^{-(2 \lambda - \varepsilon_1) t} V(\delta x) \geq e^{-(2 \lambda - \varepsilon_1) t} \rho_1 | \delta x|^2
$
for all $t \geq 0$, from which \eqref{eq:dominationH} follows.

Likewise, 
$\delta x \in \calV$ implies that $\delta x$ belongs to the interior of $\calK^+$. 
The estimate \eqref{eq:dominationV} with  $\overline{\lambda} = \lambda + \frac{\varepsilon_2}{2}$ follows from the fact that  
$V(\delta x)$ is positive definite in $\calV$ and satisfies (\ref{expestimateK+}).
\end{proofof}
\vspace{2mm}

\begin{proofof}{\emph{Theorem \ref{thm:reduced}}.}
From the dissipation inequality (\ref{eq:incremental}) we derive the inequality
$
\frac{d}{dt} e^{2\lambda t} V(x-y) = e^{2\lambda t} \dot{V}
+ 2\lambda e^{2\lambda t} V(x-y) \leq - e^{2\lambda t} \varepsilon \! \mid x-y \mid^2 
$
which, by time integration, implies  the following estimate
for any pair of solutions initialized at $x_0,y_0\in\calX$: 
\begin{eqnarray}
\nonumber 
\forall t \ge 0: V(\psi^t(x_0)-\psi^t(y_0))  \leq   e^{-2\lambda t} V(x_0-y_0) - \\
\label{eq:quadratic_decay}
      - \varepsilon \!\! \int_0^t \!\!\!e^{2\lambda (\tau-t)} |\psi^\tau\!(x_0)-\psi^\tau\!(y_0)|^2 d\tau
\end{eqnarray}
For large $t\ge 0$, the first term on the right hand side vanishes. This implies that the difference between any two solutions either asymptotically vanishes or eventually remains in the cone $\calK^-$.
We conclude that if $x$ and $y$ are distinct $\omega$-limit points, then necessarily $V(x-y) < 0$. 

Consider any  compact set $\Omega$ of $\omega$-limit points. Let $\calH_P$ and $\calV_P$ the invariant subspaces of the matrix $P$ associated to the $p$ negative and $n-p$ positive eigenvalues, respectively.
Define the linear  projection $\Pi:\calX \to \calH_P$ parallel to $\calV_P$.
We claim that  $\Pi$ restricted to $\Omega$ 
is one-to-one. This is because $x \neq y $ and $\Pi (x-y) = 0$ imply  $V(x-y) > 0$, which was proved to contradict  (\ref{eq:quadratic_decay}).

The remaining argument follows the proof of \cite[Theorem 3.17]{Hirsch2006}.
If $ y \in \Pi\Omega(x)$ then $y = \Pi z$ for a unique $z \in \Omega(x)$ and the flow
$\Pi \psi^t(z)$ on $\calH_P$ is generated by the vector field
$$F(y) := \Pi f(\Pi^{-1}(y))\qquad y \in \Omega(x) \ ,$$
which is Lipschitz by construction.
\end{proofof}

\section{Connections with the literature}
\label{sec:literature_comparison}

\subsection{Dominated splittings and dominance}

The property of {\it dominance} studied in this paper is closely related to the cousin concepts of {\it dominated splittings} and
{\it partial hyperbolicity}. Both concepts have appeared in dynamical systems theory as part of the extensive research to generalize the key concept of
 hyperbolicity pioneered by Smale and Anosov in the 60's. The common theme of that research line is that robust features of smooth dynamical systems should
be captured by robust features of their linear approximations. Robust is to be understood here in the sense of structural stability, that is, robustness to small perturbations of the vector field.

The subject is vast but we refer the interested reader to the recent survey \cite{Sambarino2014} for an orientation map.  Both dominated splittings and partial hyperbolicity continue to be an important subject in dynamical systems theory \cite{Crovisier2015, Sambarino2014,Plante1972,Hirsch1977,Katok1997,Pesin2004}. We refer the reader to \cite{wiggins1994,Pesin2004} for the implications of Theorem \ref{thm:splitting} on normal hyperbolicity and structural stability of compact attractors. Dominated splittings have also received attention in control theory \cite{Colonius2012}.   

Theorem \ref{thm:splitting} and its proof are grounded in the  results and proofs of \cite[Theorem 1.2]{Newhouse2004} and \cite[Theorem 3]{Forni2016}. 
Theorem \ref{thm:reduced} and its proof are grounded in the  results and proofs of \cite[Theorem 3.17]{Hirsch2006} and \cite[Proposition 3]{Sanchez2009}.
The proof merges these approaches with the techniques in \cite{Smith1980}.

The decomposition $T\calX = \calH \oplus \calV$ in Theorem \ref{thm:splitting} is a splitting that is  invariant under the linearized flow $\partial \psi^t$.  The splitting is called {\it dominated} because the flow satisfies
$$ \frac{ |\partial \psi^t(x) \delta x_v |}{|\partial \psi^t(x) \delta x_h|} \le \frac{\overline{C}}{\underline{C}} e^{-(\overline{\lambda}-\underline{\lambda}) t} \frac{ |\delta x_v |}{| \delta x_h|} $$
for any $\delta x_h \neq 0 \in \calH_x$ and $\delta x_v \in \calV_x$.
The dominated splitting is a consequence of the contraction of the cone fields $\calK^+(x)$ and $\calK^-(x)$. 

The contraction of a cone is {\it projective}, that is, it expresses a contraction of the non-dominant directions {\it relative} to the dominant directions of the flow. Because of the assumption of a {\it nonnegative} dissipation rate $\lambda \ge 0$, dominance further imposes  vertical contraction, that is, contraction of the flow  $\partial \psi^t$ {\it in} the vertical distribution $\calV$, see (\ref{eq:dominationV}).  The requirement of vertical contraction 
is an extra requirement of dominance with respect to the property of dominated splitting.  Theorem \ref{thm:reduced} does not hold without this extra requirement. 

\subsection{Contraction, differential stability, and $0$-dominance}

A strict $0$-dominant system  is a contractive system  \cite{Lohmiller1998,Pavlov2005,Russo2010,Forni2014}. For a linear system,
the property is simply exponential stability, meaning hyperbolicity {\it and} contraction of the $n$ transient modes  to the $0$-dimensional attractor. Because $P$ is
positive definite, the dissipation inequality implies the contraction of an ellipsoid.
The quadratic form $V(\delta x)$ is  a differential Lyapunov function in the terminology of \cite{Forni2014}.
On vector spaces $\calX = \realn$, its integration along geodesic 
curves leads to the incremental Lyapunov function $V(x-y)$.
From (\ref{eq:incremental}), it implies
exponential contraction of the difference between any two trajectories. The attractor of a $0$-dominant system
is necessarily a unique fixed point.

\subsection{Monotonicity, differential positivity, and $1$-dominance}

Strictly $1$-dominant systems are
strictly differentially positive systems \cite{Forni2016,Forni2015}.  
The contractive cone $\calK^-$ is an ellipsoidal cone, that is, it is the union of two  solid pointed convex cones 
$\calK^- = -\calK^* \cup \calK^*$as illustrated in Figure \ref{fig:cone}.

\begin{figure}[htbp]
\centering
\includegraphics[width=0.5\columnwidth]{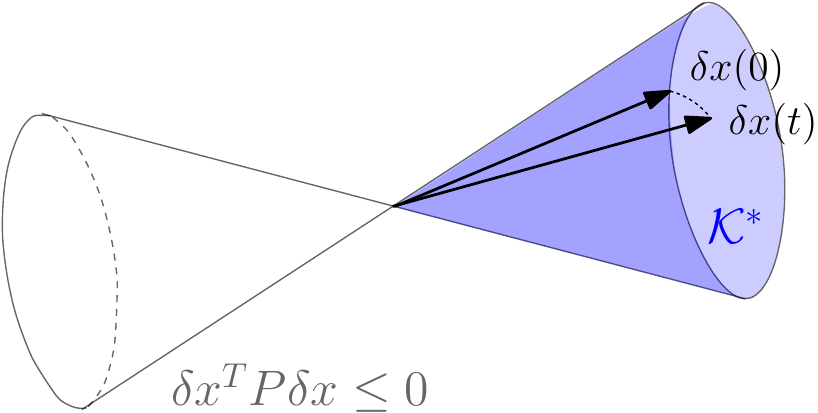}
\caption{For $\varepsilon>0$
\eqref{eq:diff-internal} guarantees that any trajectory moves from the boundary
of $\calK^*$ towards the interior.}
\label{fig:cone}
\end{figure}
A linear system that contracts a solid pointed convex cone is  a {\it positive} system \cite{Luenberger1979,Bushell1973}.
Positivity with respect to an ellipsoidal cone was characterized via LMIs in \cite{Stern1991,Stern1991a}.
See also \cite{Grussler2014} for a recent use of such characterization in model reduction.
Differential positivity thus induces a dominated splitting between one dominant direction and $n-1$ dominated directions.
The reader is referred to  \cite{Forni2016,Forni2015} for a comprehensive analysis of the asymptotic behavior of differentially positive systems. 

The distinction between $1$-dominance and differential positivity is again the property of {\it projective} contraction versus {\it vertical} contraction. Differential
positivity does not imply contraction in the $1$-dimensional vertical subspace $\calV_x$. As a consequence, the statement of  Corollary \ref{thm:asymptotic_behavior} for $p=1$ only holds for {\it generic} initial conditions
of strictly differentially positive systems, see \cite[Corollary 5]{Forni2016}.

For $\calX = \realn$, $1$-dominance is also tightly related to monotonicity 
\cite{Smith1995,Hirsch2003,Angeli2003,Hirsch2006}. 
A monotone system preserves a partial order $\preceq$: any pair of trajectories $x(\cdot)$, $y(\cdot)$ 
from ordered initial conditions
$x(0) \preceq y(0)$ satisfy
$x(t) \preceq y(t)$ for all $t\geq 0$. 

Monotonicity is implied by $1$-dominance. The partial order is the usual partial order associated to a pointed
convex cone: $x \preceq y \mbox{ iff } y-x \in \calK^*$. Monotonicity requires that trajectories $y(0) - x(0) \in \calK^*$ satisfy 
$y(t) - x(t) \in \calK^*$ for all $t\geq 0$, which is guaranteed by $1$-dominance by 
the invariance $V(x(t)-y(t)) \leq 0$ for all $t\geq 0$ and by the fact that if $x(0)-y(0) = 0$ then
$x(t)-y(t) = 0$, for all $t \geq 0$. Here also, monotonicity is independent of $\lambda$. The assumption $\lambda \ge 0$ makes  $1$-dominance stronger than monotonicity and the statement of  Corollary \ref{thm:asymptotic_behavior} for $p=1$ only holds for {\it generic} initial conditions
of monotone systems, see \cite{Hirsch1988,Hirsch2006}.

\subsection{Contraction of rank $2$ cones and $2$-dominance}

The property of $2$-dominance provides the following generalization of Poincar{\'e}-Bendixson theorem:
\begin{corollary}
\label{thm:poincare_bendixson}
For $p=2$,  
under the assumptions of Theorem \ref{thm:reduced}, 
let $\calU\subseteq\calX$ be a 
compact forward invariant set that does not contain fixed points.
Then, the $\omega$-limit set of any point in $\calU$ is a closed orbit. 
\end{corollary}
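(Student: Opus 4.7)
The plan is to invoke Theorem \ref{thm:reduced} to reduce the asymptotic analysis to a Lipschitz planar flow, and then to apply the classical Poincaré--Bendixson theorem on that reduction.

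Given $x \in \calU$, the $\omega$-limit set $\Omega(x)$ is a nonempty compact invariant subset of $\calU$, hence contains no fixed points of $f$. Theorem \ref{thm:reduced} provides a linear projection $\Pi$ onto the two-dimensional dominant subspace $\calH_P$ of $P$ whose restriction to $\Omega(x)$ is a homeomorphism onto the compact set $\Pi\Omega(x) \subset \real^2$, and which conjugates the flow on $\Omega(x)$ to the Lipschitz vector field $F(y) := \Pi f(\Pi^{-1}(y))$ on $\Pi\Omega(x)$. First I would check that $F$ has no equilibria on $\Pi\Omega(x)$: if $F(\Pi z) = 0$ for some $z \in \Omega(x)$, then $\Pi\psi^t(z) = \Pi z$ for every $t$; invariance of $\Omega(x)$ together with injectivity of $\Pi|_{\Omega(x)}$ then yield $\psi^t(z) = z$, contradicting the equilibrium-free hypothesis on $\calU$.

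The set $\Pi\Omega(x) \subset \real^2$ is therefore a compact, $F$-invariant, equilibrium-free set of a planar Lipschitz flow. For any $\tilde x \in \Omega(x)$ the forward $F$-orbit of $\Pi\tilde x$ stays in this set, and by the classical planar Poincaré--Bendixson theorem its $\omega$-limit is a closed $F$-orbit $\gamma_F \subseteq \Pi\Omega(x)$. Pulling $\gamma_F$ back through $(\Pi|_{\Omega(x)})^{-1}$ yields a closed orbit $\gamma \subseteq \Omega(x)$ of the original system.

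The hard part is showing that $\Omega(x) = \gamma$ rather than merely $\gamma \subseteq \Omega(x)$, since a priori $\Omega(x)$ could host several nested periodic orbits together with heteroclinic connections. Following the approach of Smith \cite{Smith1980}, this step combines the incremental inequality (\ref{eq:quadratic_decay}) with the invariant splitting $T\calX = \calH \oplus \calV$ from Theorem \ref{thm:splitting}: the exponential contraction along the $(n-2)$-dimensional transient distribution $\calV$ forces $\psi^t(x)$ to asymptotically track a single trajectory inside $\Omega(x)$, whose $\omega$-limit is precisely $\gamma$. Consequently $\Omega(x) = \gamma$, and the $\omega$-limit set of $x$ is a single closed orbit, as claimed.
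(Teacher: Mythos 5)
Your reduction to a planar Lipschitz flow via Theorem \ref{thm:reduced} and your verification that $F$ has no equilibria on $\Pi\Omega(x)$ are correct and follow the same route as the paper. The difficulty is concentrated in your last step, and the argument you offer there does not close it. By applying the orbit form of Poincar\'e--Bendixson to a point $\Pi\tilde x$ with $\tilde x\in\Omega(x)$ you only obtain that $\Omega(x)$ \emph{contains} a closed orbit $\gamma$, and the proposed upgrade to $\Omega(x)=\gamma$ is not a proof: the splitting of Theorem \ref{thm:splitting} is a statement about the linearized flow in the tangent bundle over a compact invariant set, whereas the claim that contraction along $\calV$ forces $\psi^t(x)$ to ``asymptotically track a single trajectory inside $\Omega(x)$'' is a nonlinear shadowing property that is nowhere established and is essentially equivalent to what you are trying to prove. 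Note also that the intrinsic properties you do have at your disposal---$\Pi\Omega(x)$ compact, connected, invariant, equilibrium-free, with every internal $\omega$- and $\alpha$-limit set a closed orbit---cannot suffice: an annulus foliated by periodic orbits of a planar flow enjoys all of them. Some extrinsic use of the fact that $\Omega(x)$ is a \emph{limit set} is unavoidable.

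The paper closes the argument by invoking the limit-set form of Poincar\'e--Bendixson (a nonempty compact limit set of a planar system containing no equilibria is a single closed orbit) applied to $\Pi\Omega(x)$ itself, rather than the orbit form applied to a point of $\Omega(x)$. For this to be legitimate one must know that $\Pi\Omega(x)$ is the $\omega$-limit set of a semi-orbit of the planar system; this is the content of Smith's technique, which the paper cites for Theorem \ref{thm:reduced}: the incremental inequality \eqref{eq:quadratic_decay} shows not only that $V(z_1-z_2)<0$ for distinct limit points, but that $\Pi$ is injective on the tail $\{\psi^t(x):t\ge T\}\cup\Omega(x)$ for $T$ large enough, so that $\Pi\psi^t(x)$ is itself a bounded semi-orbit of a planar Lipschitz system whose $\omega$-limit set is exactly $\Pi\Omega(x)$. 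Replacing your final paragraph with this injectivity-of-the-tail argument (or otherwise justifying that $\Pi\Omega(x)$ is a planar limit set before applying the strong form of the theorem) would complete the proof.
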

\begin{proof}
Take any $\omega$-limit set contained in $\calU$. 
By Theorem \ref{thm:reduced} the flow restricted to this set is topologically equivalent
to the flow of a planar system. By Poincar{\'e}-Bendixson theorem \cite[Chapter 11, Section 4]{Hirsch1974},
a nonempty compact limit set of a planar system which contains no fixed points is a closed orbit.
\end{proof}
A similar generalization was developed in the papers \cite{Smith1980,Smith1986,Sanchez2009} by generalizing
the concept of monotonicity to rank-$2$ cones. It is this generalization that motivated the results in the present paper.
Note that the statement of  Corollary \ref{thm:poincare_bendixson} only holds for generic initial conditions
of rank 2 monotone systems. The stronger conclusion of Corollary \ref{thm:poincare_bendixson} is again due to
the assumption of a nonnegative dissipation rate $\lambda \ge 0$.

\subsection{Invariant cone fields}

The assumption of a {\it constant} matrix $P$ makes the cone fields $\calK^+(x)$ and $\calK^-(x)$ {\it constant}, that is, the same
cone is attached to every $x \in \calX$. A more intrinsic characterization  for $\calX = \realn$ is that the cone field is {\it invariant} by translation,
the natural group action on a  vector space. This geometric interpretation allows for extensions on 
Lie groups and, more generally, homogeneous spaces \cite{Mostajeran2017,Mostajeran2018a,Bonnabel2009}.
The definition of dominance thus assumes an {\it invariant} cone field in the present paper.
The  invariance of the cone field is an important source of tractability for the search of the storage.

\section{Algorithmic test for $p$-dominance \\ and a simple example}
\label{sec:p-dominanceLMI}

\subsection{Dominant spectral splitting}

A necessary condition for dominance is that the spectrum of the family 
of  matrices $\partial f(x)+\lambda I$, $x \in \cal X$, admits a uniform splitting, 
see Figures \ref{fig:locus1} (left) and \ref{fig:locus2} (left) for an illustration.
\begin{theorem}
\label{thm:spectrum}
Let \eqref{eq:system} be a strictly $p$-dominant system. Then,
there exists a maximal interval $(\lambda_{\mathrm{min}}, \lambda_{\mathrm{max}})$
such that $\partial f(x)+\lambda I$  has $p$ unstable 
eigenvalues and $n-p$ stable eigenvalues (negative real part)
for each $\lambda\in(\lambda_{\mathrm{min}}, \lambda_{\mathrm{max}})$, for every $x \in \calX$.
\end{theorem}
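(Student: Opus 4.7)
The plan is to reduce the statement to a pointwise application of the classical Ostrowski--Schneider inertia theorem, and then to extract a uniform spectral gap from the strictness $\varepsilon > 0$.

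Let $\lambda_0 \geq 0$ denote the given rate of dominance. Expanding the quadratic form in \eqref{eq:diff-internal} shows that Definition~\ref{def:diff_dominance} is equivalent to the family of strict Lyapunov inequalities
\[
(\partial f(x) + \lambda_0 I)^T P + P(\partial f(x) + \lambda_0 I) \leq -\varepsilon I, \quad \forall x \in \calX,
\]
with $P$ of inertia $(p,0,n-p)$. By the generalized Lyapunov (Ostrowski--Schneider) inertia theorem applied pointwise in $x$, $\partial f(x) + \lambda_0 I$ has no eigenvalue on the imaginary axis and its spectrum splits into exactly $p$ eigenvalues with positive real part and $n-p$ with negative real part, for every $x \in \calX$. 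This already delivers the desired splitting at $\lambda = \lambda_0$, and is exactly the nonlinear pointwise counterpart of Proposition~\ref{prop:equivalences}(1).

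To turn this into an interval, I would quantify the spectral gap by testing the LMI against any unit eigenvector $v$ of $\partial f(x) + \lambda_0 I$ with eigenvalue $\mu$, obtaining
\[
2\,\mathrm{Re}(\mu)\, v^{*} P v \leq -\varepsilon,
\]
hence the $x$-uniform estimate $|\mathrm{Re}(\mu)| \geq \eta := \varepsilon/(2\|P\|)$. Labelling the eigenvalues of $\partial f(x)$ by $\mu_1(x),\dots,\mu_n(x)$ in non-decreasing order of real part, this yields the uniform bounds
\[
\sup_{x \in \calX} \mathrm{Re}\,\mu_{n-p}(x) \leq -\lambda_0 - \eta, \qquad \inf_{x \in \calX} \mathrm{Re}\,\mu_{n-p+1}(x) \geq -\lambda_0 + \eta .
\]

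Finally, set $\lambda_{\min} := -\inf_{x} \mathrm{Re}\,\mu_{n-p+1}(x)$ and $\lambda_{\max} := -\sup_{x} \mathrm{Re}\,\mu_{n-p}(x)$. The previous estimate gives $\lambda_{\min} \leq \lambda_0 - \eta < \lambda_0 + \eta \leq \lambda_{\max}$, so $(\lambda_{\min},\lambda_{\max})$ is a non-empty open interval containing $\lambda_0$. For every $\lambda$ in this interval and every $x \in \calX$, one has $\mathrm{Re}\,\mu_{n-p}(x) < -\lambda < \mathrm{Re}\,\mu_{n-p+1}(x)$, which is precisely the $(p,n-p)$ spectral splitting of $\partial f(x) + \lambda I$. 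Maximality follows from the construction: pushing $\lambda$ past either endpoint forces $-\lambda$ to meet the supremum or infimum of one of the families $\{\mathrm{Re}\,\mu_{n-p}(x)\}_{x}$ or $\{\mathrm{Re}\,\mu_{n-p+1}(x)\}_{x}$, breaking the uniform splitting. The only subtle point is uniformity in $x$ when $\calX$ is non-compact, and this is exactly what the quantitative gap $\eta = \varepsilon/(2\|P\|)$ resolves, since $P$ and $\varepsilon$ in the LMI are global constants.
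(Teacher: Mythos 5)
Your proof is correct and follows essentially the same route as the paper: apply the inertia equivalence of Proposition~\ref{prop:equivalences} pointwise to get the $(p,n-p)$ splitting of $\partial f(x)+\lambda I$ at each $x$, then use the uniformity of the strict LMI to pass from pointwise spectral gaps to a single interval. Your explicit eigenvector estimate $|\mathrm{Re}\,\mu| \geq \varepsilon/(2\|P\|)$ is a welcome quantitative sharpening of the paper's brief appeal to ``the uniformity of the strict inequality,'' which is exactly the step needed when $\calX$ is non-compact.
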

 \begin{proof}
By Proposition \ref{prop:equivalences},
the feasibility of 
$\partial f(x)^TP + P\partial f(x) + 2\lambda P \leq -\varepsilon I$, for $\varepsilon > 0$
and for $P$ of inertia $p$ guarantees that each matrix $\partial f(x) + \lambda I$ 
has $p$ unstable eigenvalues and $n-p$ stable eigenvalues (negative real part)
at each $x\in \calX$. The splitting of the eigenvalues of $\partial f(x)$ at each $x$ 
is preserved for every value of $\lambda$ within some given spectral gap $(\lambda_{\mathrm{min}}(x), \lambda_{\mathrm{max}}(x))$.
Thus, by the uniformity of the strict inequality above,
$\lambda_{\mathrm{min}} := \sup_{x\in\calX} \lambda_{\mathrm{min}}(x) < \lambda$ and 
 $\lambda_{\mathrm{max}} := \inf_{x\in\calX} \lambda_{\mathrm{max}}(x) > \lambda$.
 \end{proof}
Spectral analysis of the Jacobian matrix $\partial f(x)$ is thus useful to select $p$ and $\lambda$
in dominance analysis. The uniform splitting of the spectrum is necessary but of course not sufficient
for dominance. This is well-known even for $p=0$. For instance, classical
counterexamples to Kalman's conjecture \cite{Kalman1957} illustrate that a system
can fail to be contractive even when the spectrum of its Jacobian is uniformly in the left half complex plane.

\subsection{Convex relaxations}
\label{sec:convex_relaxation}

Sufficient conditions for  dominance are provided by the inequality
\begin{equation}
\label{eq:diff-internalLMI}
\partial f(x)^T P + P \partial f(x) + 2\lambda P +\varepsilon I \leq 0\qquad \forall x \in \calX
\end{equation}
whose solutions $P$, for some $\varepsilon\geq 0$,
must also satisfy a fixed inertia constraint. 

If the spectrum of $\partial f(x)$  admits a stable splitting for a  given
$\lambda$, then  all the solutions $P$ of
\eqref{eq:diff-internalLMI} must share the same inertia, meaning that the inertia condition
can be dropped. One is then left with solving an infinite family of LMIs.

It is common practice to reduce an infinite family of LMIS to a finite family through
convex relaxation, see e.g. \cite{lmibook} and references therein.
Let $\calA := \{A_1, \dots, A_N\}$ be a family of matrices such that
$\partial f(x) \in \mathit{ConvexHull}(\calA)$ for all $x$. Then,
by construction, 
any (uniform) solution $P$ to 
\begin{equation}
\label{eq:diff-internalLMIrelax}
A_i^T P + P A_i + 2\lambda P + \varepsilon I \leq 0  \qquad 1\leq i\leq N
\end{equation}
is a solution to \eqref{eq:diff-internalLMI}. For instance, at each
$x$, $\partial f(x) = \sum_{i=1}^N \rho_i(x) A_i$ for a given set of $\rho_i(x)$
such that $\sum_{i=1}^N \rho_i(x) = 1$. Thus, the left-hand side of \eqref{eq:diff-internalLMI}
reads 
$
\left( \sum_{i=1}^N \rho_i(x) A_i^T\right) P + P \left( \sum_{i=1}^N \rho_i(x) A_i \right) + 2\lambda P + \varepsilon I
= 
\sum_{i=1}^N \rho_i(x) \left( A_i^T P + P A_i  + 2\lambda P + \varepsilon I \right)
\leq 0,
$
where the last inequality follows from \eqref{eq:diff-internalLMIrelax}.

The algorithmic steps of dominance analysis of a given nonlinear system $\dot x = f(x)$ can thus
be summarized as follows:
\begin{enumerate}
\item Estimate $p$ and $\lambda$ from the spectrum analysis of $\partial f(x)$
\item Reduce the infinite family of LMIs to a finite family by convex relaxations.
\item Test the feasibility of the relaxed LMI with a LMI solver.
\end{enumerate}

\subsection{Example}
\label{sec:examples_dominance}

We illustrate the theory on a classical textbook example: a one degree of freedom mechanical
system with nonlinear spring (Duffing model), actuated by a DC motor with a PI feedback control.
While this example is elementary, it illustrates the tractability of dominance analysis on a four-dimensional
model, for which a global analysis of the attractors is a nontrivial problem.

The mechanical model is given by
\begin{equation}
\label{eq:pendulum}
\dot{x}_p = x_v
\qquad \quad 
\dot{x}_v = -\alpha(x_p) - c x_v + u
\end{equation}
where $x_p$ and $x_v$ are position and velocity of the mass respectively,
$u$ is the force input to the system, $c$ is the damping coefficient, and 
$\alpha(x_p) := \partial U(x_p)$ is the force deriving from the mechanical potential
$U:\real \to \real$. 
Contraction, or $0$-dominance is expected with sufficient damping if the potential is strictly convex.
A differential quadratic storage is easily found for the numerical value $c=5$ and  the assumption $$1 \leq \partial \alpha(x_p) \leq 5 \, .$$ 
$P$ is computed via convex  relaxation \eqref{eq:diff-internalLMIrelax} 
for $A_1 := \smallmat{0 & 1 \\ -1 & -5}$ and $A_2 := \smallmat{ 0 & 1 \\ -5 & -5}$.
For    $\lambda = 0$ and  $\varepsilon = 0.01$, the LMI solver (Yalmip \cite{Yalmip2004}, SeDuMi \cite{Sedumi1999} ) returns
$$
P := \mymatrix{cc}
{
 0.8696  &   0.1482 \\
 0.1482  &  0.1304
} 
$$
which is positive definite.

The same approach is repeated for the non-convex potential $$-2 \leq \partial \alpha(x_p) \leq 5 \, ,$$
which allows for several minima (including the classical double-well potential of the nonlinear Duffing 
model \cite[Chapter 2]{guckenheimer1986}).
Figure \ref{fig:locus1} (left) suggests a stable splitting between the two eigenvalues.
The differential storage is computed again by convex relaxation \eqref{eq:diff-internalLMIrelax} for
$A_1 := \smallmat{0 & 1 \\ 2 & -5}$ and $A_2 := \smallmat{ 0 & 1 \\ -5 & -5}$.
For $\lambda =2$ and $\varepsilon = 0.01$, the LMI solver returns  
$$
P = \mymatrix{cc}
{
   -5.1987 &   3.6260 \\
    3.6260  &  6.1987
} 
$$ 
which has inertia $1$. Figure \ref{fig:locus1} (right) shows the non positive level sets of $\delta x^T P \delta x$.
The system is $1$-dominant, meaning that every bounded trajectory asymptotically
converges to some fixed point. 
\begin{figure}[htbp]
\hspace{-5mm}
\includegraphics[width=1.1\columnwidth]{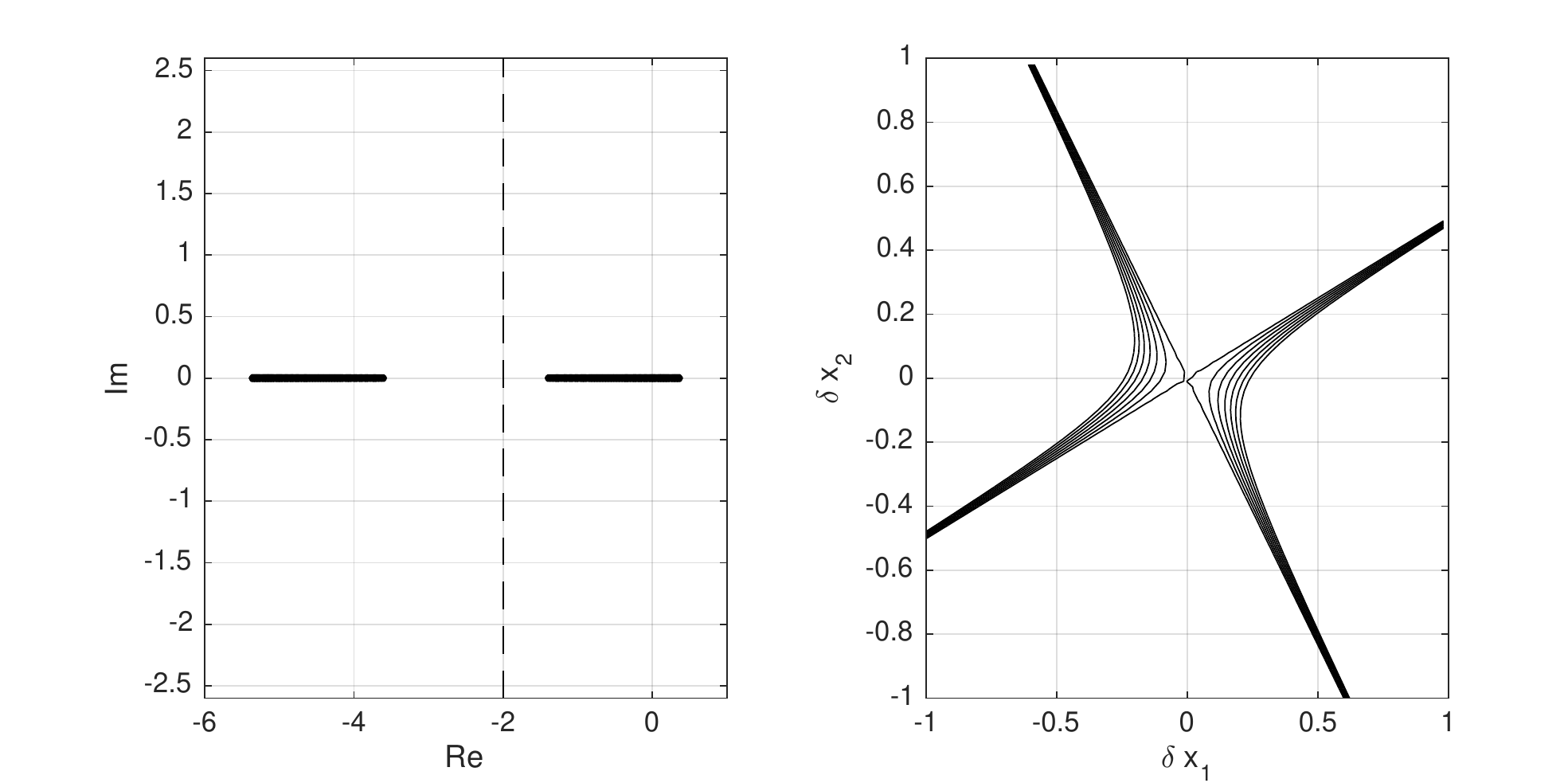}
\caption{\textbf{Left:} roots of the Jacobian of the mechanical systems for different $x_p$. The value
of $\lambda$ is emphasized by the vertical dashed line. 
\textbf{Right:} negative level sets of $\delta x^T P \delta x$ where $P$ has inertia $1$. }
\label{fig:locus1}
\end{figure}

Suppose now that  the mechanical system is driven by a DC motor modelled by the electrical equation
\begin{equation}
\label{eq:DCmotor}
u = k_f x_i 
\qquad \quad
L \dot{x}_i = -R x_i - k_e x_v + V \, .
\end{equation}
$x_i$ is the current of the circuit, $k_f$ is a static approximation of the current to force characteristic, 
$L$ and $R$ are inductance and resistance respectively, 
$k_e$ is the back electromotive force coefficient,
and the voltage $V$ is an additional input.

It is easy to verify that $1$-dominance is preserved for 
$R=1$, $k_f = 1$, $k_e = 1$, and $0 < L< 0.05$, 
since the time-scale separation between 
electrical and mechanical dynamics introduces
a mild perturbation on the dominant/slow dynamics of the system.
For $L=0.1$ the reduced time scale separation
allows for interaction between electrical and
mechanical dynamics. The distribution of the eigenvalues of the Jacobian
in Figure \ref{fig:locus2} (left) suggests that strict $1$-dominance
still holds. Indeed, for $L=0.1$, $\lambda = 2$, and $\varepsilon = 0.01$, the LMI solver
returns 
$$
P = \mymatrix{ccc}
{
   -3.0942  &  0.8985 &  -0.5355 \\
    0.8985  &  3.3771  &  0.1935 \\
   -0.5355  &  0.1935  &  0.7171
} 
$$
which has inertia $1$. For constant inputs $V$, every bounded trajectory necessarily converges to some fixed point,
as illustrated in Figure \ref{fig:locus2} (right), where we considered the double well potential
$U(x_p) := x_p^2/4 + \cos(x_p)$.
\begin{figure}[htbp]
\hspace{-5mm}
\includegraphics[width=1.1\columnwidth]{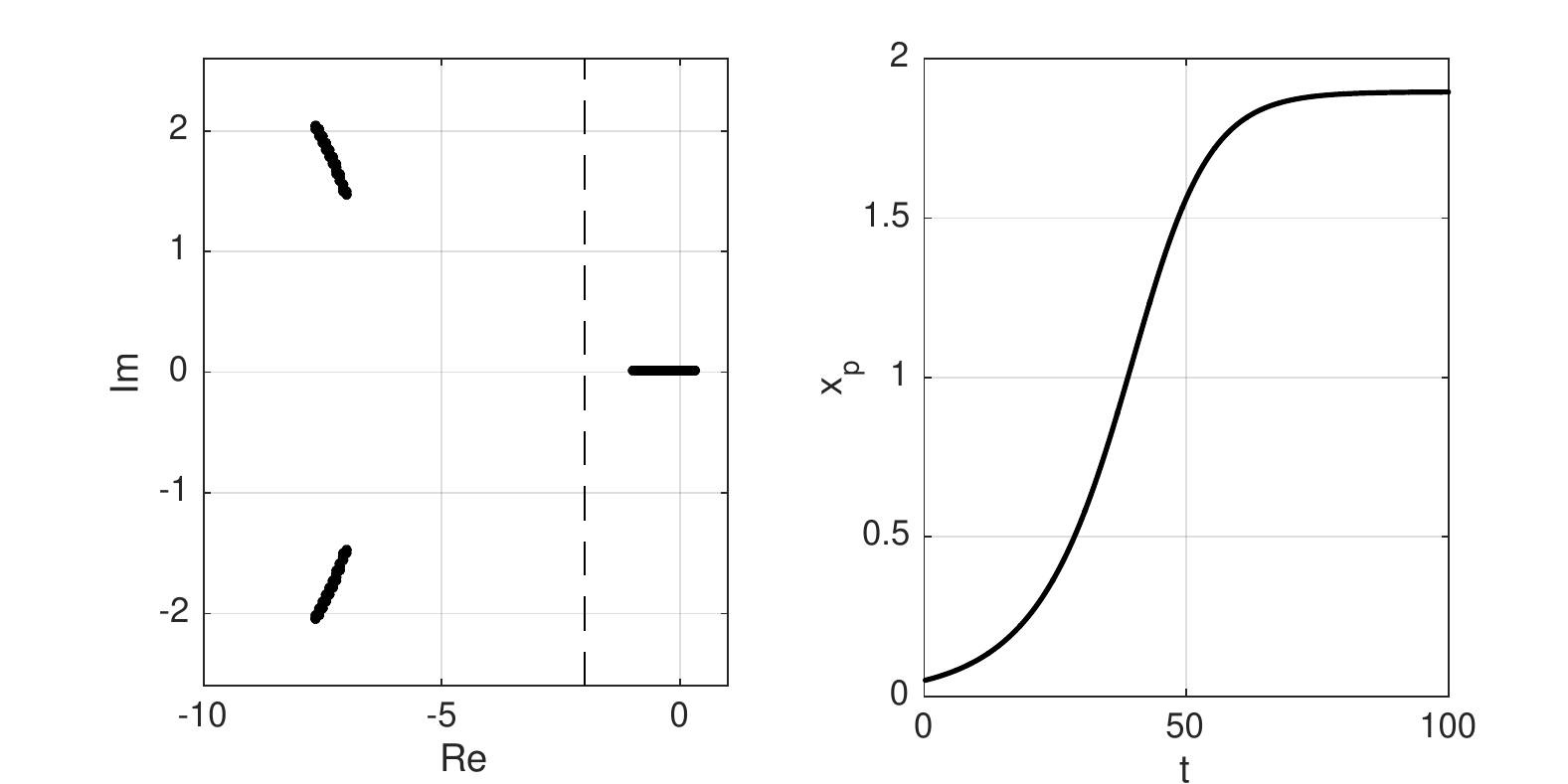}
\caption{\textbf{Left:} roots of the Jacobian of \eqref{eq:pendulum} and \eqref{eq:DCmotor},
by sampling $-2 \leq \partial \alpha(x_p) \leq 5$ at different $x_p$. 
\textbf{Right:} Trajectory of the mass position in time for the interconnected system \eqref{eq:pendulum}, \eqref{eq:DCmotor} 
with potential $U(x_p) := x_p^2/4 + \cos(x_p)$, from the initial condition
$x_p = 0.05$, $x_v = 0$, $x_i = 0$, at constant $V=0$.}
\label{fig:locus2}
\end{figure}

Finally, we close the loop with a PI controller
\begin{equation}
\label{eq:PI}
V = k_P (r-x_p) + k_I x_c \qquad \dot{x}_c = r-x_p 
\end{equation}
where $x_c$ is the integrator variable, $k_P$ and $k_I$ are
proportional and integral gains, respectively, and $r$ is the reference.

The degree of dominance of the closed loop can be modulated via
PI control. A detailed analysis of PI control for $p$-dominance
is beyond the scope of this paper. We just observe that with gains 
$k_P =1$ and $k_I=5$ the eigenvalues of
the Jacobian in  Figure \ref{fig:locus3} (left) exhibits a stable splitting  into two groups of two eigenvalues.
$2$-dominance is verified with $\lambda = 2$, $\varepsilon = 0.01$, in which case the LMI solver returns the storage
$$
P = \mymatrix{cccc}
{
   -4.3713  &  1.7901 &  -0.5507 &  0.0216 \\
    1.7901  &  5.6483  &  0.3768   & -0.9320 \\
   -0.5507  &  0.3768  &  1.0521 &  -0.4363 \\
    0.0216  & -0.9320 &  -0.4363  & -1.3291
} 
$$
with inertia $2$.  For $r=0$ the unique fixed point at $0$ is unstable. We conclude that every bounded
trajectory must converge to a periodic orbit, as illustrated in Figure \ref{fig:locus3} (right).

\begin{figure}[htbp]
\hspace{-5mm}
\includegraphics[width=1.1\columnwidth]{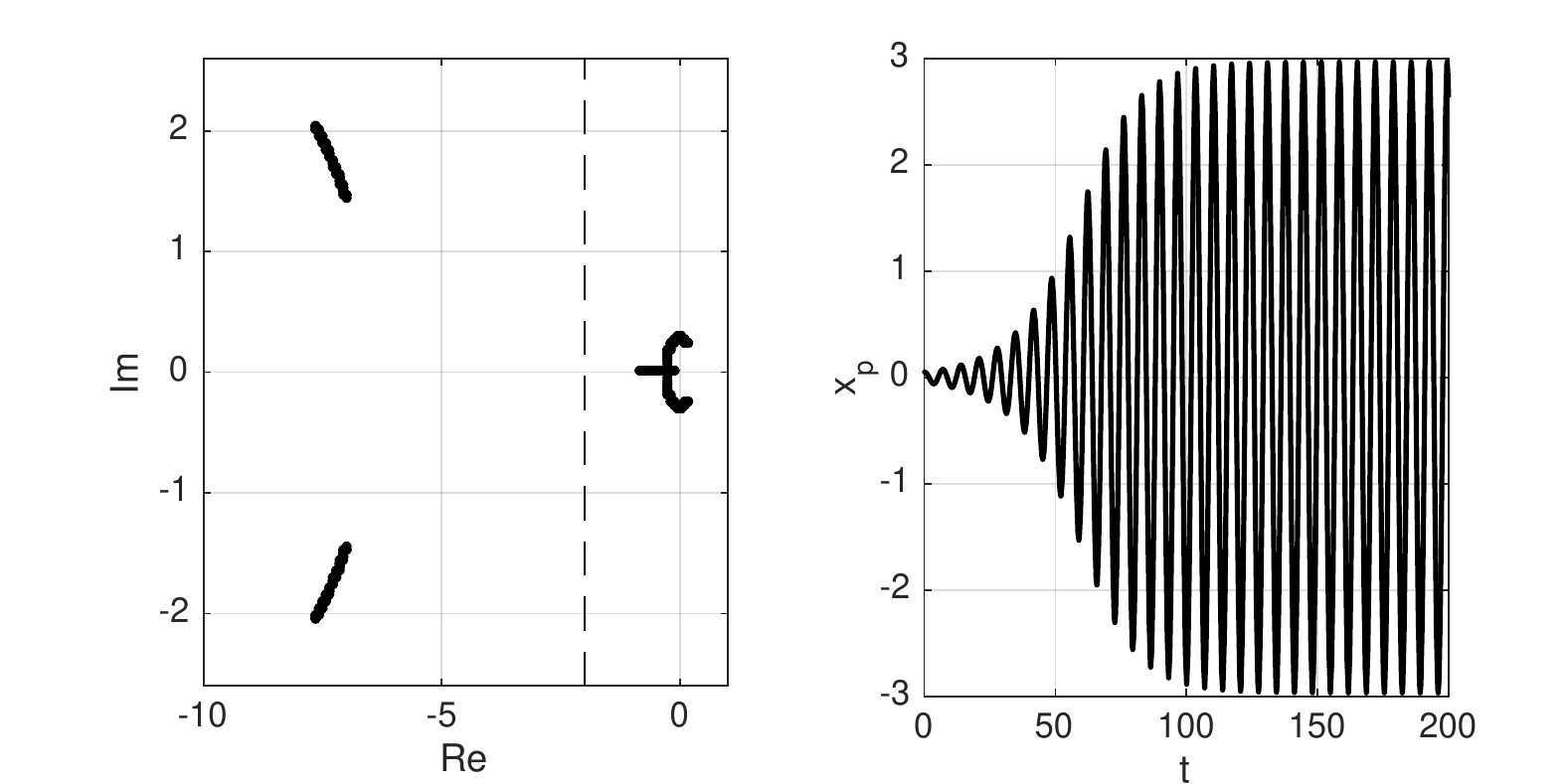}
\caption{\textbf{Left:} roots of the Jacobian of the closed loop given by \eqref{eq:pendulum}, \eqref{eq:DCmotor} and \eqref{eq:PI}
by sampling $-2 \leq \partial \alpha(x_p) \leq 5$ at different $x_p$. 
\textbf{Right:} Trajectory of the mass position in time for the closed loop \eqref{eq:pendulum}, \eqref{eq:DCmotor}, \eqref{eq:PI}
with potential $U(x_p) := x_p^2/4 + \cos(x_p)$ and PI control gains $k_P = 1$ and $k_I = 5$, 
from the initial condition $x_p = 0.05$, $x_v = 0$, $x_i = 0$ and $x_c = 0$, at constant $r=0$.}
\label{fig:locus3}
\end{figure}

\begin{remark}
A variant of the example is to replace the mechanical  model with  the nonlinear pendulum
$\dot x_p = x_v$, $\dot x_v = -\sin(x_p) - c x_v + u$ with
$(x_p,x_v) \in \mathbb{S}\times \real$. One-dominance is proven as in 
Duffing example but the state-space is now nonlinear.  From Theorem \ref{thm:splitting}, 
every attractor of the pendulum exhibits at least one direction of contraction, locally.
The shape of $\mathbb{S}\times \real$ makes $1$-dominance 
compatible with fixed point attractors, for small constant torque $u$,
and with attractors defined by periodic orbits,  for large constant 
torque $u$, \cite[Section 8.5]{Strogatz1994}. See also the analysis in \cite{Forni2016} using differential positivity.
\end{remark}

\section{Interconnections}
\label{sec:interconnections}

\subsection{Differential dissipativity theory}
Dissipativity theory is a fundamental complement to stability theory in systems and control \cite{Willems1972,Willems1972a, Willems2007b}.
Stability, and more generally, dominance, is the property of a {\it closed} system. Dissipativity
is the property of an {\it open system}, i.e. a system with inputs and outputs. By decomposing
a closed system as an interconnection of open subsystems, the search of  quadratic storage
for the analysis of the {\it closed} system is converted into the solution of linear matrix inequalities for 
the {\it open} subsystems. 

The recent papers \cite{Forni2013,Forni2013a,Schaft2013} have proposed to use dissipativity theory {\it differentially} in order to
 study contraction (i.e. $0$-dominance) via interconnections.  We pursue this approach  to study $p$-dominance.  We restrict
  to open systems of the form
\begin{equation}
\label{eq:open-system}
\left\{
\begin{array}{rcl}
\dot{x} &=& f(x) + Bu \\
y &= & C x + D u
\end{array} 
\right. 
\qquad x\in \calX,  (y,u) \in \calW
\end{equation}
where $\calX$ and $\calW$ are smooth manifolds of dimension $n$ and $m$, respectively.
The associated prolonged system reads 
\begin{equation}
\label{eq:open-prolonged}
\left\{
\begin{array}{rcl}
\dot{x} &=& f(x)+Bu \\
\dot{\delta x} &=& \partial f(x) \delta x + B \delta u \\
y &=& C x + Du \\
\delta y &=& C \delta x + D \delta u 
\end{array}
\right.
\end{equation}
where $(x,\delta x) \in T\calX$ and $(w,\delta w) := ((y,u), (\delta y, \delta u)) \in T\calW$.

We also assume that $\calW$ is covered by a single chart 
and that the matrix $L$ and the symmetric matrices $Q$ and $R$ 
below have suitable dimensions. All restrictions above relate to  the restriction of 
{\it constant} tensors $P$ in this paper for the analysis of dominance.

\begin{definition}
\label{def:p-dissipativity}
A nonlinear system \eqref{eq:open-system}
is \emph{differentially $p$-dissipative with rate $\lambda$ 
and differential supply rate}
\begin{equation}
\label{eq:supply}
s(w,\delta w) := \mymatrix{c}{\! \!\!\delta y \!\!\! \\ \!\!\! \delta u\!\!\!}^T \!\!
\mymatrix{cc}{
\!Q\! & \! L \! \\ \! L^T \! & \! R \!
}\!
\mymatrix{c}{\!\!\!\delta y\!\!\! \\ \!\!\! \delta u\!\!\!} 
\end{equation} 
if for some symmetric matrix $P$ with inertia $p$
and some constant $\varepsilon \geq 0$,
the prolonged system
\eqref{eq:open-prolonged} satisfies the conic constraint
\begin{equation}
\label{eq:p-dissipativitydiff}
\mymatrix{c}{\!\!\!\dot{\delta x}\!\!\! \\ \!\!\!\delta x\!\!\!}^T \!\!
\mymatrix{cc}{
\!0\! & \!P\! \\ \!P\! & \!2\lambda P \!+\! \varepsilon I\!
}\!
\mymatrix{c}{\!\!\!\dot{\delta x}\!\!\! \\ \!\!\!\delta x\!\!\!}
\leq
\mymatrix{c}{\! \!\!\delta y \!\!\! \\ \!\!\! \delta u\!\!\!}^T \!\!
\mymatrix{cc}{
\!Q\! & \! L \! \\ \! L^T \! & \! R \!
}\!
\mymatrix{c}{\!\!\!\delta y\!\!\! \\ \!\!\! \delta u\!\!\!} 
\end{equation}
for all $(x,\delta x)\in T\calX$ and 
all $(w, \delta w) \in \calW$.
\eqref{eq:open-system} is \emph{strictly} differentially $p$-dissipative if 
$\varepsilon>0$.
\end{definition}

Differential dissipativity is guaranteed by the feasibility of the inequality
\begin{equation*}
{\footnotesize\mymatrix{cc}
{\!\!
\!\partial\! f(x)^T \!\! P \!\!+\!\! P \partial\! f(x)  \!-\!C^T  \! Q C \!+ \!2\lambda P \!+\! \varepsilon I \!\!\! &\!\! \! P B \!-\! C^T\!L-C^T  \! Q D \!\!\! \\
 \!\! B^T \!P \!-\! L^T C-D^T  \! Q C \!\!  \!\!& \!\!\!\!  \!-\!R\!-\!D^T\!\! L \!-\!L^{\!T}\!\! D \!-\! D^T\!QD\!\!\!
 }}
\!\leq\! 0
\end{equation*}
for some symmetric matrix $P$ with inertia $p$
and some constant $\varepsilon \geq 0$.
A necessary condition for the feasibility of
this inequality is that
$\partial f(x)^T \!P + P \partial f(x) - C^T\! Q C + 2\lambda P + \varepsilon I \leq 0 $
which corresponds to \eqref{eq:diff-internalLMI} for $Q=0$ and clarifies 
the connection between differential dissipativity and dominance.
This infinite family of LMIs can reduced to a finite family through
relaxations, following the approach in Section \ref{sec:convex_relaxation}.

\subsection{A dissipativity theorem for $p$-dominance}

Suppose that a nonlinear system 
can be decomposed as the interconnection of two subsystems:
\begin{subequations}
\label{eq:interconnected}
\begin{align}
\left\{
\begin{array}{rcl}
\dot x_1 &=& f_1(x_1) + B_1 u_1  \\
y_1 &=& \underline{C} x_1 + D_1 u_1
\end{array}
\right.  \label{eq:Sigma1} \\ 
\left\{
\begin{array}{rcl}
\dot x_2 &=& f_2(x_2) + B_2 u_2 \label{eq:Sigma2}\\
y_2 &=& \overline{C} x_2 + D_2 u_2
\end{array}
\right. \\
u = H y + v \hspace{1.5cm} \label{eq:H}
\end{align}
\end{subequations}
where the matrix $H$ specifies the interconnection pattern between
inputs $u = [\, u_1^T \ u_2^T\,]^T$ and outputs $y = [\, y_1^T \ y_2^T\,]^T$.
$v = [\, v_1^T \ v_2^T\,]^T$ is an additional input.
Then the dissipativity theorem below provides conditions for the
differential $p$-dissipativity of the interconnected system from
the differential $p$-dissipativity of its components. 

The formulation of the theorem follows the approach of \cite{Moylan2014}. 
The theorem can be easily adapted to the interconnection of several systems \cite{Moylan1979}. 
In what follows we will use 
$\bar{Q}:= \footnotesize{\mymatrix{c|c}{\!\!\!\!Q_1\!\!\! & \\  \hline \!\! & \!\!\!Q_2\!\!\!\!}} $,
$\bar{L} := \footnotesize{\mymatrix{c|c}{\!\!\!\!L_1\!\!\! & \\  \hline \!\! & \!\!\!L_2\!\!\!\!}}$, and
$\bar{R} := \footnotesize{\mymatrix{c|c}{\!\!\!\!R_1\!\!\! & \\  \hline \!\! & \!\!\!R_2\!\!\!\!}}$
for readability, where for $i \in \{1,2\}$ the matrices $Q_i$, $L_i$ and $R_i$ 
characterize  the differential supply rate fo each subsystem. 

\begin{theorem}
\label{thm:interconnection}
Let \eqref{eq:Sigma1} and \eqref{eq:Sigma2} be 
(strictly) differentially $p_1$-dissipative and 
$p_2$-dissipative
respectively, with
uniform rate $\lambda$ and 
differential supply rate
\begin{equation}
\mymatrix{c}{\!\!\delta y_i\!\! \\ \!\!\delta u_i\!\!}^T \!
\mymatrix{cc}{
Q_i & L_i \\ L_i^T & R_i
}
 \mymatrix{c}{\!\!\delta y_i\!\! \\ \!\!\delta u_i\!\!}
 \qquad i \in \{1,2\} \, .
\end{equation}
Then, the interconnected system  \eqref{eq:interconnected}
is (strictly) differentially $p$-dissipative with degree $p= p_1+p_2$, 
rate $\lambda$, and differential supply rate 
\begin{equation}
\mymatrix{c}{\!\!\!\delta y \!\!\! \\ \!\!\!\delta v\!\!\!}^{\!T} \!\!
\mymatrix{cc}{
Q & L \\ L^T & R
}
 \mymatrix{c}{\!\!\delta y\!\! \\ \!\!\delta v\!\!} 
\end{equation}
where
\begin{equation}
\begin{array}{rcl}
Q &:= & \bar{Q} + \bar{L}H + H^T \bar{L}^T + H^T\bar{R}H \\
L &:=& \bar{L} + H^T\bar{R} \\
R &:=& \bar{R} \ .
\end{array}
\end{equation}
For $v=0$, 
the interconnected system  \eqref{eq:interconnected}
is (strictly) $p$-dominant if $Q \leq 0$. 
\end{theorem}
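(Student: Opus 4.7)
The plan is to work with the block-diagonal storage matrix $P := \mathrm{blockdiag}(P_1,P_2)$, where $P_i$ is the tensor supplied by the differential $p_i$-dissipativity of $\Sigma_i$. Since inertia is additive across an orthogonal direct sum, $P$ has inertia $p_1+p_2$, which will serve as the candidate storage for the interconnected system. Writing $\delta x := (\delta x_1,\delta x_2)$ and $V_i(\delta x_i) := \delta x_i^T P_i \delta x_i$, the candidate storage function is $V(\delta x) := V_1(\delta x_1) + V_2(\delta x_2) = \delta x^T P \delta x$.

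First I would instantiate the dissipation inequality \eqref{eq:p-dissipativitydiff} for each subsystem along the prolonged flow, obtaining
\begin{equation*}
\dot V_i(\delta x_i) + 2\lambda V_i(\delta x_i) + \varepsilon_i |\delta x_i|^2 \;\leq\; \begin{bmatrix}\delta y_i \\ \delta u_i\end{bmatrix}^T \begin{bmatrix}Q_i & L_i \\ L_i^T & R_i\end{bmatrix} \begin{bmatrix}\delta y_i \\ \delta u_i\end{bmatrix}
\end{equation*}
for $i=1,2$. Summing the two inequalities gives
\begin{equation*}
\dot V(\delta x) + 2\lambda V(\delta x) + \varepsilon |\delta x|^2 \;\leq\; \begin{bmatrix}\delta y \\ \delta u\end{bmatrix}^T \begin{bmatrix}\bar Q & \bar L \\ \bar L^T & \bar R\end{bmatrix} \begin{bmatrix}\delta y \\ \delta u\end{bmatrix},
\end{equation*}
with $\varepsilon := \min\{\varepsilon_1,\varepsilon_2\}$, strict when both subsystems are strict. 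Next I would substitute the linearised interconnection $\delta u = H\delta y + \delta v$ obtained from \eqref{eq:H} into the right-hand side and expand. A direct bilinear expansion yields exactly the quadratic form with blocks $Q=\bar Q+\bar L H+H^T\bar L^T+H^T\bar R H$, $L=\bar L+H^T\bar R$ and $R=\bar R$, which is the differential supply rate claimed in the statement.

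It remains to recognise that the inequality obtained is precisely \eqref{eq:p-dissipativitydiff} for the interconnected prolonged system with storage $V$ and supply $s(\delta w,\delta v)$, proving differential $(p_1+p_2)$-dissipativity at uniform rate $\lambda$. For the dominance claim, specialise to $\delta v=0$: the supply reduces to $\delta y^T Q \delta y$, and if $Q\leq 0$ the right-hand side is non-positive. Hence
\begin{equation*}
\dot V(\delta x) + 2\lambda V(\delta x) + \varepsilon |\delta x|^2 \;\leq\; 0,
\end{equation*}
which is the internal differential dissipation inequality \eqref{eq:diff-internal} for the closed-loop prolonged system with tensor $P$ of inertia $p_1+p_2$. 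By Definition \ref{def:diff_dominance}, the interconnected system is (strictly, when $\varepsilon>0$) $(p_1+p_2)$-dominant with rate $\lambda$.

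The routine part is the block bookkeeping; the only genuine care needed is to verify that $P=\mathrm{blockdiag}(P_1,P_2)$ has the additive inertia $p_1+p_2$ (immediate from diagonal block structure), and that uniformity in $\lambda$ across the two subsystems is what allows the two $2\lambda V_i$ terms to combine cleanly into $2\lambda V$ before the interconnection is imposed. The main obstacle, such as it is, is simply the algebra of the quadratic substitution $\delta u = H\delta y + \delta v$; everything else follows by linearity of the prolonged dynamics and by reducing the interconnected dissipation inequality to the internal form used in Definition \ref{def:diff_dominance}.
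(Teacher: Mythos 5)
Your proposal is correct and follows essentially the same route as the paper: aggregate the two storages into the block-diagonal $P$ of inertia $p_1+p_2$, sum the two differential dissipation inequalities (uniformity of $\lambda$ being what lets them combine), substitute $\delta u = H\delta y + \delta v$ to obtain the closed-loop supply, and specialise to $\delta v = 0$ with $Q\le 0$ to recover the internal inequality \eqref{eq:diff-internal}. Your handling of $\varepsilon := \min\{\varepsilon_1,\varepsilon_2\}$ is in fact slightly more explicit than the paper's.
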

\begin{proof} 
Let $P_1$ and $P_2$ be solutions to \eqref{eq:p-dissipativity}
respectively for \eqref{eq:Sigma1} and \eqref{eq:Sigma2}. 
Take $P := P_1 + P_2$, $x := [\, x_1^T \ x_2^T\,]^T$, and 
$\delta x := [\, \delta x_1^T \ \delta x_2^T\,]^T$.
Note that $P_1$ has inertia $p_1$ and
 $P_2$ has inertia $p_2$, 
 where $n_1$ and $n_2$ are the dimensions of the
 two state manifolds, respectively. It follows that
 $P$ has inertia $p_1+p_2$.
 Furthermore, by (strict) differential dissipativity of 
 \eqref{eq:Sigma1} and \eqref{eq:Sigma2}, 
 \eqref{eq:p-dissipativity} can be written 
 in the aggregated form
$
\dot{\delta x}^T P \delta x + \delta x^T P \dot{\delta x} 
+ \delta x^T (2\lambda P + \varepsilon I) \delta x 
\leq \delta y^T \overline{Q} \delta y
+ \delta y^T \overline{L} \delta u
+ \delta u^T \overline{L}^T \delta y 
+ \delta u^T \overline{R} \delta u
$,
for some $\varepsilon \geq $ ($\varepsilon > 0$).
Since $\delta u = H \delta y + \delta v$, the expression above reads
$
\dot{\delta x}^T P \delta x + \delta x^T P \dot{\delta x} 
+ \delta x^T (2\lambda P + \varepsilon I) \delta x 
\leq \delta y^T Q \delta y
+ \delta y^T L \delta v 
+  \delta v^T L^T \delta y 
+ \delta u^T R \delta u
$
which shows (strict) differential $p$-dissipativity of the interconnected system.

Finally, for $v=0$ we get
$
\dot{\delta x}^T P \delta x + \delta x^T P \dot{\delta x} 
+ \delta x^T (2\lambda P + \varepsilon I) \delta x 
\leq \delta y^T Q \delta y \leq 0
$
where the last inequality follows from the condition $Q\leq 0$.
This shows that $P$ is a solution of \eqref{eq:diff-internal}.
\end{proof}

Theorem \ref{thm:interconnection} is a general result to study dominance via interconnections.
In particular, it provides an interconnection theorem for the analysis of contraction ($0$-dominance)
and monotonicity ($1$-dominance).

\subsection{Differential passivity analysis}
The passivity supply
\begin{equation}
\label{eq:passivity}
\mymatrix{c}{\! \!\!\delta y \!\!\! \\ \!\!\! \delta u\!\!\!}^T \!\!
\mymatrix{cc}{
\!0\! & \!I \! \\ \! I \! & \! 0 \!
}\!
\mymatrix{c}{\!\!\!\delta y\!\!\! \\ \!\!\! \delta u\!\!\!} 
\end{equation} 
plays an important role in dissipativity theory because it connects the theory to
the {\it physical} property that a passive system can only store the energy supplied by its environment.
The theory of port-Hamiltonian systems encompasses a broad modeling framework of physical models
with passivity properties \cite{vanderschaft2014}.

The passivity theorem states that the feedback interconnection of passive systems is passive. A differential 
version of this important theorem is provided by 
Theorem \ref{thm:interconnection}: any system that is differentially $p$-passive preserves that property under
the feedback interconnection with a differentially passive (i.e. $0$-passive) system.

Passivity theory has proven useful in identifying robust controller structures that preserve stability. An important
particular case is the class of proportional-integral controllers, that we now revisit in the light of dominance analysis.

We first consider the {\it proportional} feedback controller 
\begin{equation}
\label{eq:P}
(P) \; \; \; \bar{y}  = k_P(\bar{u})  \ .
\end{equation}
The controller is differentially $0$-passive (for arbitrary nonnegative rate) 
from $\bar{u}$ to $\bar{y}$ provided that the mapping $k_p(\cdot)$
is monotone, or differentially positive:  $\partial k_P(\bar{u}) \geq 0$ for all $\bar{u}\in \real$.
For instance, 
$$
\mymatrix{c}{\! \!\!\delta \bar{y} \!\!\! \\ \!\!\! \delta \bar{u}\!\!\!}^{\!T} \!\!
\mymatrix{cc}{
\!0\! & \!I\!  \\ \! I \! & \! 0 \!
}\!
\mymatrix{c}{\!\!\!\delta \bar{y}\!\!\! \\ \!\!\! \delta \bar{u}\!\!\!} 
= 2\delta \bar{u}^T \partial k(\bar{u})\delta \bar{u} 
\geq 0.
$$

Likewise, the {\it proportional-integral} feedback controller 
\begin{equation}
\label{eq:P}
(PI) \; \; \; \dot x_c = \bar{u}, \; \bar{y}  = k_P(\bar{u}) + k_I x_c 
\end{equation}
from $\bar{u}$ to $\bar{y}$ is 
\begin{itemize}
\item differentially $0$-passive if $k_P(\cdot)$ is monotone and if $k_I \geq 0$,
with rate $\lambda = 0$. 
\item differentially $1$-passive if $k_P(\cdot)$ is monotone and if $k_I < 0$,
with rate $\lambda \geq 0$ (strictly for $\lambda > 0$).
\end{itemize}
This is because  the storage $S(\delta x_c) := \frac{k_I}{2}\delta x_c^2$ satisfies 
$\dot S \le \delta \bar{u} \delta \bar{y}$. Furthermore,
for $\lambda > 0$ and $k_I < 0$ we have 
$\dot S + \lambda k_I \delta x_c^2 + \varepsilon \delta x_c^2
 \le \delta \bar{u} \delta \bar{y}$
for $\varepsilon = \lambda |k_I|$.

As an illustration, we revisit the nonlinear (Duffing) mass-spring-damper system in Section \ref{sec:examples_dominance}.
For any nonlinear spring satisfying $-3 \leq \partial \alpha(x_p) \leq 3$, 
the system is strictly differential $1$-passive with rate $\lambda \!=\! 2$ from $u$ to $y\!=\!-x_p$: defining the state $x := [\,x_p\ x_v\,]$, the  
variational dynamics $\dot{\delta x} = A(x)\delta x + B \delta u$, $\delta y= C\delta x$
$$
A(x) := \mymatrix{cc}{\!0\! & \!1\! \\ \!-\partial \alpha(x_p)\! & \!-5\!} \quad B := \mymatrix{c}{\!0\! \\ \!1\!} \quad C := \mymatrix{cc}{\!-1\! & \!0\!}.
$$
satisfies
$$
\begin{array}{rcl}
A(x)^T P + PA(x) + 2 \lambda P + \varepsilon I &\leq& 0  \\
P B &=& C^T
\end{array}
$$
for $P = \smallmat{-2 & -1 \\ -1 & 0}$ and $\varepsilon = 0.01$, for all $x\in \real^2$. 

Assume for simplicity that $k_P(\cdot)$ is odd and monotone
and consider the interconnection
$$
u = \bar{y} + v\qquad \bar{u} = -y \ .
$$
The closed loops with the proportional controller
and with the proportional-integral controller are 
illustrated in Figure \ref{fig:interconnection}, for the
the particular case $k_P(\bar u) := \tanh(2 \bar u)$ and $k_I := -1$.

By Theorem \ref{thm:interconnection}, 
the proportional feedback $u=-k_P(y) + v= k_P(x_p)+v$ preserves $1$-differential passivity.
Figure \ref{fig:route_to_oscillations} (left) illustrates a situation where the original system is monostable (linear spring, i.e.  a quadratic potential)
 and then converted to a bistable system with the
 static output feedback $u = -\tanh(2y) + v = \tanh(2 x_p) + v$. The example is elementary but illustrative of a general principle: the feedback controller
 shapes the potential energy of a contractive mechanical  system to convert the system from monostable to bistable.

Likewise,  
the proportional-integral control $u = -k_P(y) - k_I \int y + v = k_P(x_p) - k_I \int y + v = \tanh(2 x_p) - \int x_p + v$
 makes the closed-loop system differentially $2$-passive
because it corresponds to the negative feedback interconnection of two differentially $1$-passive systems.
Figure \ref{fig:route_to_oscillations2} illustrates that, in this new configuration, the role of the proportional controller is to convert a linear stable system ($k_P=0$) into
a $2$-dominant system with a stable limit cycle oscillation ($k_P=1$).

Again the example is elementary but illustrates the general principle that PI control can turn a contractive mechanical system into a system with a limit cycle attractor.
Such conclusions are usually drawn from a local analysis of the linearization around the stable equilibrium of the contractive system. The first scenario corresponds to a saddle node bifurcation whereas the second scenario corresponds to a Hopf bifurcation. The differential approach in this paper makes this analysis non local.

\begin{figure}[htbp]
\centering
\includegraphics[width=0.92\columnwidth]{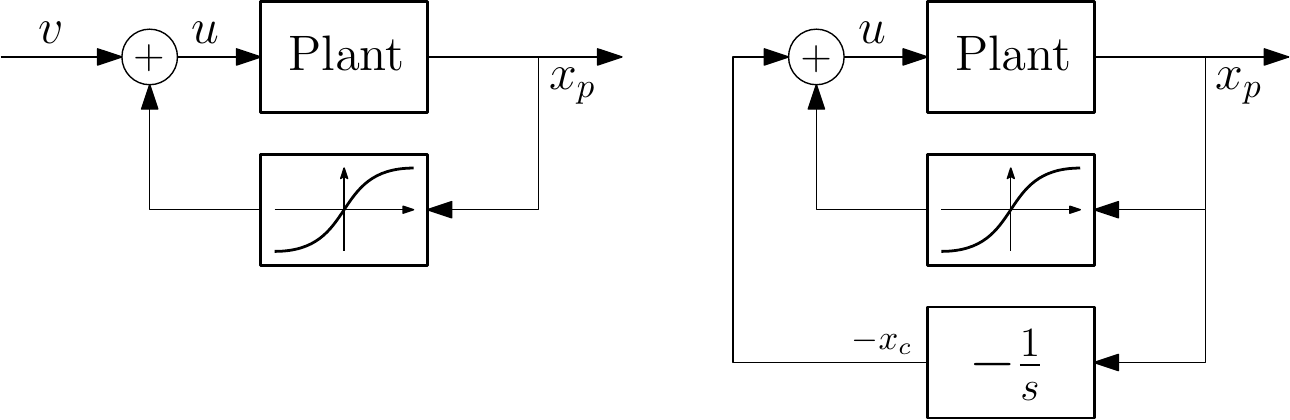}
\caption{
\textbf{Left:} Closed loop of the plant given by \eqref{eq:pendulum} with $\alpha(x_p) = x_p$ and $u=\tanh(2x_p) + v$ (saturated proportional feedback).
\textbf{Right:} Closed loop given by the interconnection of the linear mass-spring-damper system and nonlinear 
proportional-integral controller.}
\label{fig:interconnection}
\end{figure}
\begin{figure}[htbp]
\centering
\includegraphics[width=1\columnwidth]{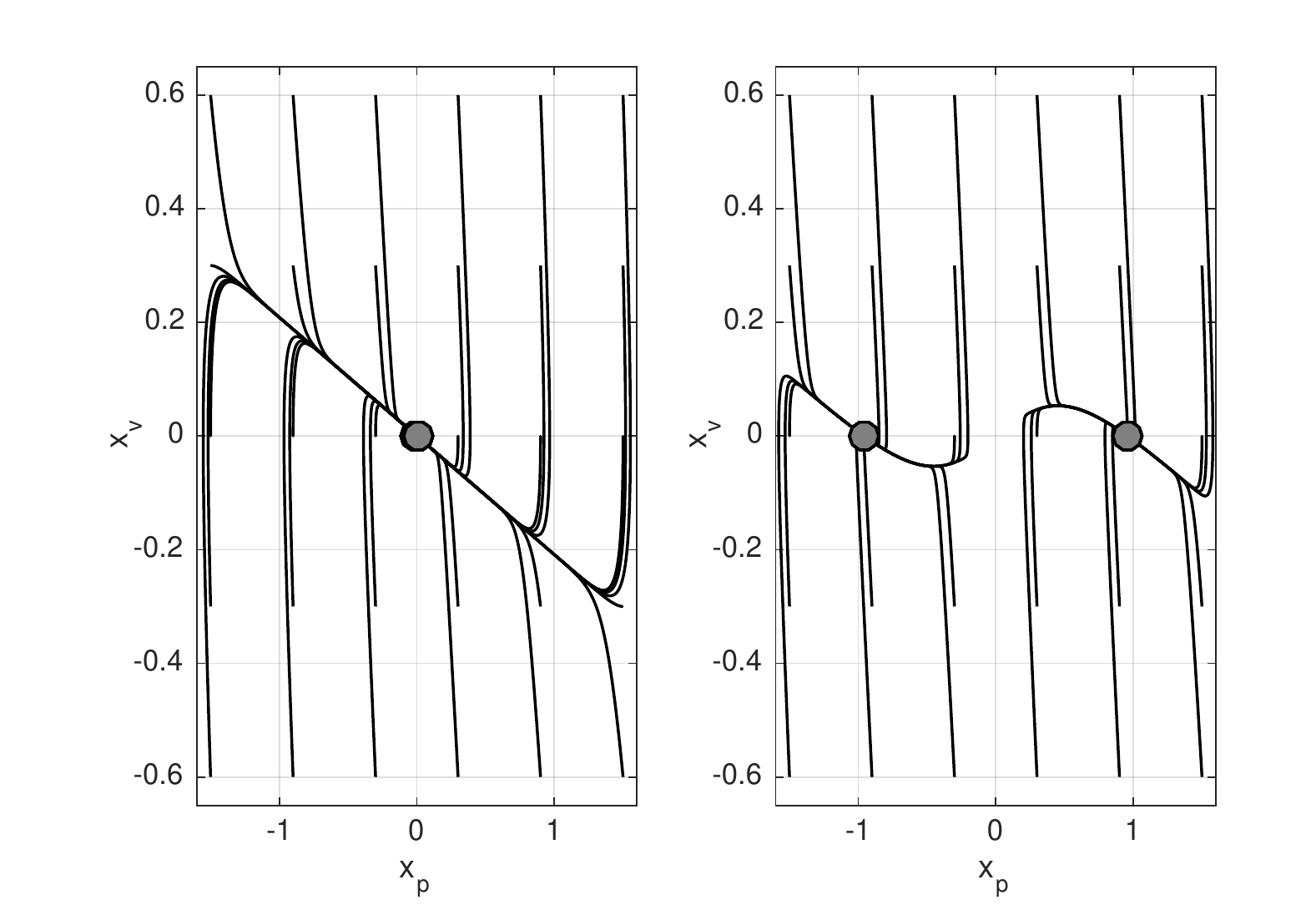}
\caption{
\textbf{Left:} trajectories from different initial conditions of the open loop system \eqref{eq:pendulum} with $\alpha(x_p) = x_p$ and $u=0$. 
\textbf{Right:} trajectories from different initial conditions of the closed loop \eqref{eq:pendulum},$u =  \tanh(2 x_p) + v$, with $\alpha(x_p) = x_p$ and $v=0$.}
\label{fig:route_to_oscillations}
\end{figure}
\begin{figure}[htbp]
\centering
\includegraphics[width=1\columnwidth]{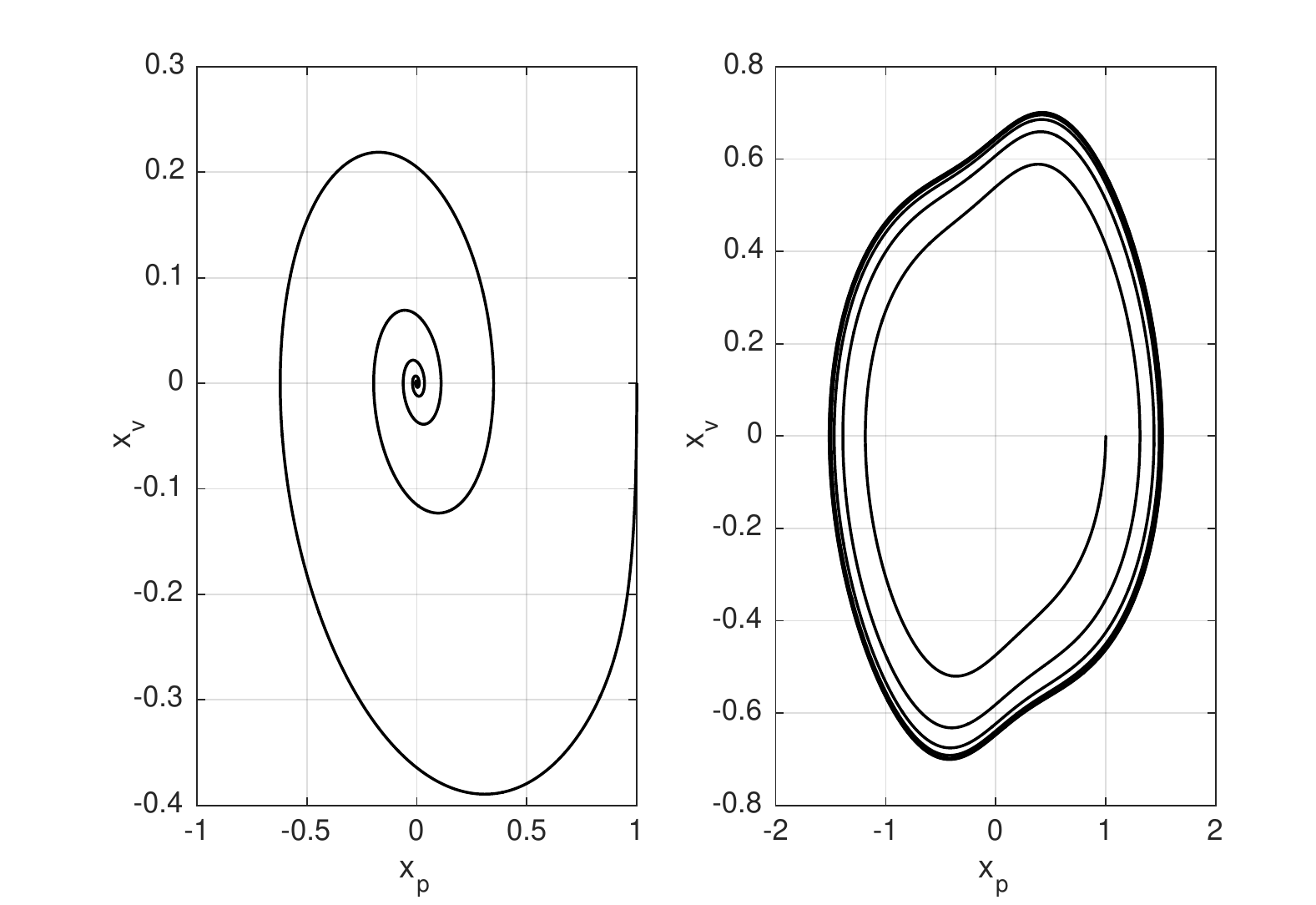}
\caption{
    \textbf{Left:} Trajectory of the linear closed-loop system ($k_P=0)$. 
\textbf{Right:}   Trajectory of the nonlinear closed-loop system $(k_P=1)$ . Parameters: $\alpha(x) = x_p$. Initial condition $x_p = 1$, $x_v = 0$, $x_c = 0$.}  
\label{fig:route_to_oscillations2}
\end{figure}

\subsection{Differential small gain analysis}

The supply
\begin{equation}
\label{eq:small-gain}
\mymatrix{c}{\! \!\!\delta y \!\!\! \\ \!\!\! \delta u\!\!\!}^T \!\!
\mymatrix{cc}{
\!-I\! & \!0 \! \\ \! 0 \! & \! \gamma^2 I \!
}\!
\mymatrix{c}{\!\!\!\delta y\!\!\! \\ \!\!\! \delta u\!\!\!} \ .
\end{equation} 
also has a special status in dissipativity theory because of its connection to the small gain theorem, 
a cornerstone of robust control theory \cite{Zames1966a,Zames1968b,Desoer1975,Zhou1995,VanDerSchaft1999}.

When specialized to the supply (\ref{eq:small-gain}),  Theorem \ref{thm:interconnection} provides a differential version of the small gain theorem:
$p$-dominance is preserved under feedback with a $0$-dominant system provided that their finite differential gains $\gamma_1$ and $\gamma_2$ (of degree $p$ and $0$ respectively)
satisfy the small gain condition $\gamma_1\gamma_2 < 1$.

The differential small gain theorem opens the way to a differential approach of nonlinear robust control.
The link with an operator-theoretic definition of the differential gain of a dominant system is beyond the scope
of this paper (see e.g. \cite{Georgiu1993} for a concept  of differential gain for stable systems), but we briefly illustrate how
the theorem can be used in nonlinear robustness analysis.

\begin{figure}[htbp]
\centering
\includegraphics[width=0.48\columnwidth]{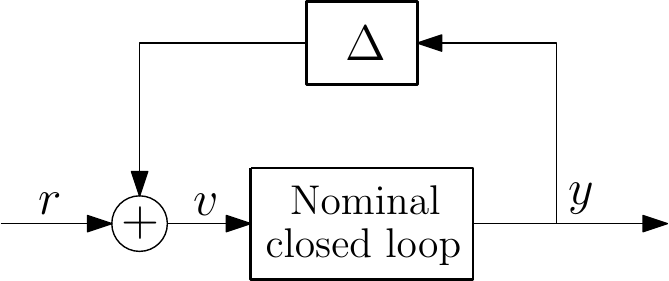}
\caption{
A nonlinear plant with parametric uncertainties represented as the feedback 
interconnection of nominal plant and uncertain dynamics $\Delta$.
}
\label{fig:small_gain_loop}
\end{figure}

Consider the nominal closed loop in Figure \ref{fig:interconnection} (right)
where the mass-spring-damper system \eqref{eq:pendulum}
with linear spring $\alpha(x_p) = x_p$ is interconnected to a saturated proportional-integral
feedback $u = \tanh(2 x_p) - \int x_p + v$. We study the robustness of the oscillations
to perturbations affecting the spring constant and damping coefficient. We use the usual
representation of parametric uncertainties through feedback interconnections 
as shown in Figure \ref{fig:small_gain_loop}.

For $c=5$ and $\lambda = 2$ the nominal closed loop 
system is $2$-dominant with differential gain $\gamma := 0.5636$ from the input $v$ to the position output $x_p$,
obtained for 
$$P := \mymatrix{ccc}{   -0.5522 &   0.0498 &  -0.0171 \\    0.0498  &  1.4946  & 0.3068 \\ -0.0171  &  0.3068  &  0.0576} .$$ 
The nonlinear mechanical spring is modeled through an additive perturbation, i.e.
$$
 \alpha(x_p) = x_p + \Delta(x_p) \ ,
$$
corresponding to the feedback interconnection 
in Figure \ref{fig:small_gain_loop}, with $v = \Delta(x_p)$.
The differential small-gain theorem implies that
$2$-dominance is preserved for any 
$|\partial \Delta (x_p)| \leq 1/\gamma$.
Furthermore, for any perturbation $\Delta(x_p)$ that preserves the origin as a  unique and unstable fixed point of the
closed loop system, every bounded trajectory of the perturbed closed loop will converge to a periodic orbit, like 
in the nominal case. 

The analysis of perturbations affecting the damping coefficient $c=5$ is similar.
For $c=5$ and $\lambda = 2$ the nominal closed loop 
system is $2$-dominant with differential gain $\gamma := 0.5468$ from the input $v$ to the velocity output $x_v$,
obtained for $$P := \mymatrix{ccc}{   -0.2859  & -0.0028 &  -0.0131 \\   -0.0028  &  1.2328 &   0.2977 \\   -0.0131  &  0.2977  &  0.0532}.$$ 
Oscillations will persist when the linear damping coefficient $c x_v$ is
replaced by a nonlinear coefficient $c(x_v)$ given by 
$$
c(x_v) = 5 x_v + \Delta(x_v)
$$
provided that $ |\partial \Delta(x_v) | < 1 / \gamma$ and that the origin remains an unstable fixed point.

\section{Conclusion}

This paper illustrated that linear-quadratic dissipativity theory, a cornerstone of stability theory,
generalizes with surprising ease to the analysis of {\it dominance}. Dominance analysis in turn
is relevant to capture the frequent property that the asymptotic behavior of a nonlinear
dynamical model is low-dimensional. In particular,  the theory seems relevant 
to generalize the theory of stability to a theory of  {\it multistability} and {\it limit cycle} analysis.

The approach in this paper is {\it differential}, meaning that the usual linear matrix inequalities of dissipativity theory
are considered in the tangent bundle. They characterize a dominated splitting of the linearized flow
between $p$ dominant directions and $n-p$ transient directions. This property is captured with
the usual linear matrix inequalities of dissipativity theory, with the only difference that the solution
matrix $P$ is required to have a fixed inertia ($p$ negative eigenvalues and $n-p$ positive eigenvalues),
the standard stability framework corresponding to the case $p=0$.

An important restriction throughout this paper is to analyze $p$-dominance with a constant
quadratic storage (i.e. a constant $P$). This restriction is the price to be paid for tractability.
Standard LMI solvers can then be used to construct the storage.

A number of generalizations deserve further attention. Those include the construction of differential
storages that are non quadratic, or/and state-dependent (i.e. non constant $P(x)$), as well 
as the study of dominance with state-dependent rate $\lambda(x)$, or systems with a different
degree of dominance in different parts of the state-space, or the analysis of dominance in 
non smooth systems. Such generalizations have received considerable attention in the analysis
of contraction, i.e. $0$-dominance, suggesting clear avenues to study $p$-dominance.
Finally, differential dissipativity theory offers an opportunity to revisit classical results
from robust  control theory and absolute stability theory in the context of multistable and
oscillatory systems, see e.g. \cite{Miranda-Villatoro2017} for a first step in that direction.

 \linespread{0.92}

\bibliographystyle{plain}

\end{document}